\documentclass[twocolumn,preprintnumbers,superscriptaddress]{revtex4-2}
\usepackage{amsmath}
\usepackage{amssymb}
\usepackage{graphicx}
\usepackage{hyperref}
\usepackage{physics}
\usepackage{xcolor}
\usepackage{xspace,slashed}
\input{qcircuit.sty}
\usepackage{multirow}
\usepackage[utf8]{inputenc}
\usepackage{comment}
\usepackage{amsthm}
\usepackage{lipsum}

\newtheorem{theorem}{Theorem}
\newtheorem{definition}{Definition}
\newtheorem{lemma}{Lemma}

\newcommand{\beq}{\begin{equation}}
\newcommand{\eeq}{\end{equation}}

\newcommand{\relu}{{\rm ReLU}}
\newcommand{\poly}{{\rm poly}}
\newcommand{\step}{{\rm step}}

\newcommand{\F}{\mathcal{F}}

\begin{document}

\title{One qubit as a Universal Approximant}

\newcommand{\BSC}{Barcelona Supercomputing Center, Barcelona 08034, Spain}
\newcommand{\ICCUB}{Departament de F\'isica Qu\`antica i Astrof\'isica and Institut de Ci\`encies del Cosmos, Universitat de Barcelona, Barcelona 08028, Spain}
\newcommand{\IFAE}{Institut de F\'\i sica d'Altes Energies, The Barcelona Institute of Science and Technology, Bellaterra 08193, Spain}
\newcommand{\Qilimanjaro}{Qilimanjaro Quantum Tech, Barcelona 08007, Spain}
\newcommand{\CQT}{Center for Quantum Technologies, Singapore 117543, Singapore}
\newcommand{\TII}{Quantum Research Centre, Technology Innovation Institute, P.O.Box: 9639, Abu Dhabi, United Arab Emirates}

\author{Adrián Pérez-Salinas}
\affiliation{\BSC}
\affiliation{\ICCUB}

\author{David L\'opez-N\'u\~nez}
\affiliation{\BSC}
\affiliation{\IFAE}
\affiliation{\ICCUB}

\author{Artur Garc\'\i a-S\'aez}
\affiliation{\BSC}
\affiliation{\Qilimanjaro}

\author{P. Forn-D\'\i az}
\affiliation{\IFAE}
\affiliation{\Qilimanjaro}

\author{Jos\'{e} I. Latorre}
\affiliation{\Qilimanjaro}
\affiliation{\CQT}
\affiliation{\TII}

\begin{abstract}
A single-qubit circuit can approximate any bounded complex function stored in the degrees of freedom defining its quantum gates.
The single-qubit approximant presented in this work is operated through a series of gates that take as their parameterization the independent variable of the target function and an additional set of adjustable parameters. The independent variable is re-uploaded in every gate while the parameters are optimized for each target function. The output state of this quantum circuit becomes more accurate as the number of re-uploadings of the independent variable increases, i. e., as more layers of gates parameterized with the independent variable are applied. In this work, we provide two different proofs of this claim related to both Fourier series and the Universal Approximation Theorem for Neural Networks, and we benchmark both methods against their classical counterparts. We further implement a single-qubit approximant in a real superconducting qubit device, demonstrating how the ability to describe a set of functions improves with the depth of the quantum circuit.
This work shows the robustness of the re-uploading technique on Quantum Machine Learning. 
\end{abstract}

\maketitle
\section{Introduction}
A quantum computer can be viewed as a machine that receives inputs and delivers outputs through the read-out of qubits. The design of the sequence of quantum gates forming the circuit will determine the kind of processing performed. A fundamental question to pose is whether a quantum circuit can deliver any possible functionality and, if so, what number of qubits and depth are required to achieve a given accuracy.

This problem is reminiscent of a series of classical theorems that establish that a given function can be re-expressed as a linear combination of other specific functions
~\cite{fourier-dirichlet1829, fourier-riemann1867, uat-cybenko1989, uat-hornik1991}. 
In particular, in classical machine learning the Universal Approximation Theorem (UAT) ~\cite{uat-cybenko1989} proves that a neural network with a unique intermediate hidden layer can converge to approximate any continuous function. The accuracy of the approximation increases with the number of neurons in the intermediate layer. It is important to notice that each of these neurons is fed with the original data of the problem. The query complexity of the process increases linearly with the number of neurons. This observation is critical to find out an equivalent result in a quantum formulation in order to support progress in  quantum machine learning  \cite{reuploading-perezsalinas2020,proton-perezsalinas2020, autoencoders-bravoprieto2020,
mitarai:2018circuit,Zhu:2019circuit,lloyd:2020embeddings, liu:2020svm,Rebentrost:2014svm,lloyd:2013ml}. Previous works have already made relevant contributions in this area~\cite{encoding-schuld2020, universal:goto2020}, in particular assessing the universal expressive power of quantum circuits.

In this paper, we present two independent proofs that any bounded complex function can be approximated in a convergent way by a quantum circuit acting on one qubit, constituting a single-qubit universal approximant. This demonstrates the precise representation power of a single-qubit circuit, which increases as more layers are added. The essential element of the present construction is the re-uploading of the input variables along the action of the quantum gates~\cite{reuploading-perezsalinas2020}. Thus, in analogy to neural networks, query complexity is attached to accuracy. The first way to prove this result is to make contact with harmonic analysis. This is a natural step as single-qubit gates are expandable in Fourier series that can be rearranged to fit existing theorems. The second method is analogous to the UAT using a translation into quantum circuits. A series of specific gates leads to an output state that approximates functions uniformly. In both cases, the quantum theorems inherit the applicability and characteristics of their classical counterparts.

The practical way to approximate any function with a quantum circuit requires finding an explicit set of parameters to define the unitary gates. This can be accomplished in a variational way. Compared to Fourier series, this approach brings more power to a quantum computer in practise. The possibility of taking angles which are free of being multiples of a given basic frequency provides a larger representation capability to a quantum circuit. However, this is analogue to neural networks with weights which are not constrained to take specific values. 

We provide numerical benchmarks of these theorems computed via classical simulation of quantum computers. Our simulations show how an increasing query complexity can improve the accuracy of the approximation of a number of test functions. The way a single-qubit circuit can approximate any function in an experimental setup by using a superconducting qubit is explicitly illustrated. Experimental results confirms the trend of simulations up to a point where the accumulation of errors dominates the experiment.

This procedure of storing complex functions in single-qubit circuits acquires its importance when addressing Machine-Learning problems. The goal of any Machine Learning model is to extract and generalize the hidden features of some training dataset in order to predict the behaviour of unseen data. That is, the model must learn some unknown function from sampling data while the structure of such function remains covert. The proofs given in this work ensure that this kind of models can learn any function underlying the training data. In addition, the processing of data in Machine Learning models must be performed in a linear way to avoid the emergence of biases, as it is done in this case. Essentially, the theoretical work proposed in this work aims to play for quantum circuits the same role UAT plays for Neural Networks. 

It is worth mentioning that, even though one can store a complex function in a single-qubit circuit, retrieving that information from the quantum state is costly and requires a large amount of measurements. Instead, this algorithm should be regarded as a subroutine to be included in a more complex computation. For instance, this subroutine could play the role of a classifier~\cite{reuploading-perezsalinas2020} or an approximant of unknown functions~\cite{proton-perezsalinas2020}. These works also suggest the use of similar approaches extended to circuits with many qubits.

This article is organized as follows. In Sec.~\ref{sec:theory} we introduce the idea of a single-qubit approximant and present two theorems on its universality. Sec.~\ref{sec:benchmark} is devoted to the numerical benchmarks to test the approximation algorithms. The experimental implementation using a superconducting qubit is described in Sec.~\ref{sec:experiment}. Results are presented in Sec.~\ref{sec:results}. We leave conclusions for Sec.~\ref{sec:conclusions}.
More details on the results of this work can be found in the appendix.

\section{Universality of the single-qubit approximant}\label{sec:theory}

In this section, we propose an encoding of mathematical functions as the degrees of freedom of a single qubit state. We also define two different circuit architectures that approximate those functions and present theorems supporting this claim. 

\begin{figure*}
\centering
    \includegraphics[width=.7\linewidth]{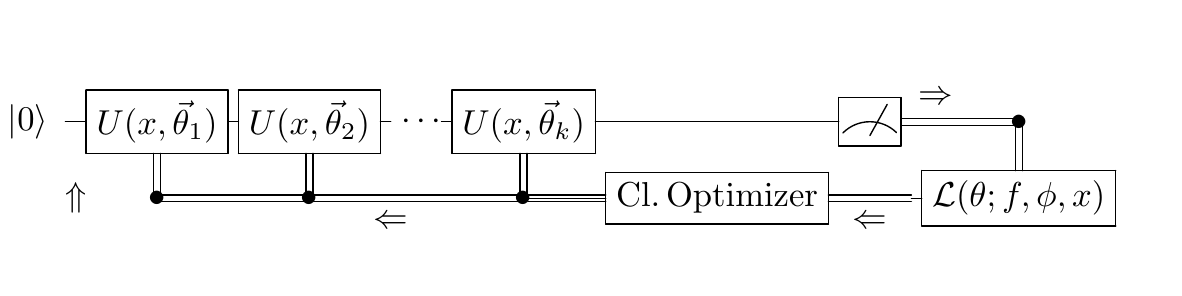}
    \caption{Scheme for the hybrid algorithm. The gates $U(x, \vec\theta_i)$ define the operation performed by the quantum circuit. All $\vec\theta_i$ are independent of each other. Using the measured output state, a loss function $\mathcal L$ is constructed using these measurements. A classical optimizer looks for the set of parameters minimizing $\mathcal L$.}
    \label{fig:var_circuit}
\end{figure*}

\subsection{Set-up of the problem}

The most general representation of a single-qubit quantum state stores a single complex number. That is, 
\begin{equation}
    \ket{\psi} = \sqrt{1 - f^2} \ket 0 + f e^{i \phi} \ket 1, 
\end{equation}
with $f, \phi$ real numbers and $f \in [0, 1]$, $\phi \in [0, 2\pi)$. Our aim is to encode a complex function within the values ($f, \phi$) by defining them as $ f:\mathbb{R}^m \rightarrow [0, 1]$ and $\phi: \mathbb{R}^m \rightarrow [0, 2\pi)$. To do so, we
design a circuit $\mathcal U_{f, \phi}(x)$ such that its output state approximates the desired complex function as
\begin{equation}\label{eq:aprox}
    \bra{1}\mathcal{U}^{(k), s}_{f, \phi}\ket{0} \rightarrow f(x) e^{i \phi(x)}
\end{equation}
Note that building an approximation to a bounded complex function is enough to address any bounded complex function by a simple shifting and re-scaling. In addition, approximating a complex function includes real-valued functions by either setting $\phi(x) = 0$ or relating the real-valued function to the modulus of other complex functions.

\begin{definition} \label{def:prod_general}
The $k$-th approximating circuit is defined as 
\begin{equation} \label{eq:prod_general}
    \mathcal{U}^{(k), s}_{f, \phi} = \prod_{i = 1}^k U^s(x, \vec\theta_i),
\end{equation}
where $U^s(x, \vec\theta)$ is a fundamental gate depending on $x$ and a set of parameters $\vec\theta$. $s$ stands for the type of single-qubit gate used in this work. 
\end{definition}

The models chosen in our construction will be made explicit later, including the exact definition of the so-called {\sl fundamental gate}. The expected behavior of this quantity $\mathcal{U}^{(k), s}_{f, \phi}$ is that the approximation from Eq.~\eqref{eq:aprox} will improve as the number $k$ increases, that is, as the independent variable is re-uploaded multiple times. As we shall see, the appropriate choice of these parameters $\vec\theta_i$ enables a systematic approximation of any functionality. Equivalently, the optimal values of $\vec\theta_i$ depend on $f(x)$ and $\phi(x)$.

In general, the set of parameters for a given gate
$\vec\theta_i$ is composed of a set of angles.
The quest for the optimal set of parameters $\lbrace \vec\theta_1, \vec\theta_2, \ldots, \vec\theta_k \rbrace$ is driven by optimizing a particular loss function $\mathcal{L}(\theta; f, \phi, x)$. This loss function must be designed in such a way that Eq.~\eqref{eq:aprox} becomes an equality as $\mathcal{L} \rightarrow 0$. The optimal parameters are then
\begin{equation}
    \lbrace \vec\theta_1, \vec\theta_2, \ldots, \vec\theta_k \rbrace = {\rm argmin}_\theta \mathcal{L}(\theta; f, \phi, x).
\end{equation} 

Therefore, the proposed quantum procedure for storing functions within the output state of a given circuit belongs to the class of hybrid quantum-classical variational algorithms. Variational algorithms are quantum algorithms whose global structure is defined, but the exact gates are not~\cite{vqe, qaoa}. A scheme of the proposed algorithm is depicted in Fig.~\ref{fig:var_circuit}.

\subsection{Two theorems on universality}
We complete the structure of the algorithm presented above with the design of the single-qubit gates $U^s$ aforementioned in Definition~\ref{def:prod_general}. In the following, we present two sets of single-qubit gates to construct quantum circuits that represent arbitrary complex functions. Each set is based on known results from the theory of function approximations, namely Fourier series~\cite{fourier-dirichlet1829, fourier-riemann1867} and Universal Approximation Theorem (UAT)~\cite{uat-cybenko1989, uat-hornik1991}, respectively. The range of applicability of these theorems for quantum circuits and the conditions for universality are thus inherited from their classical counterparts.

\subsubsection*{Quantum Fourier series}

Fourier series as a constructive method permits expressing a great range of target functions defined within an interval as a sum of a set of known functions.

\begin{theorem}\label{th:fourier}
{\bf Fourier series}~\cite{fourier-dirichlet1829, fourier-riemann1867, fourier-carleson1966, fourier-turan1970}\\
Let $z$ be any function $z: \mathbb{R} \rightarrow \mathbb{C}$ with a finite number of finite discontinuities integrable within an interval $[a, b] \in \mathbb{R}$ of length $P$. Then 
\begin{equation}\label{eq:fourier_exp_complex}
    z_N(x) = \sum_{n = -N}^N c_n e^{i \frac{2\pi n x}{P}},
\end{equation}
where
\begin{equation}
    c_n = \frac{1}{P} \int_P z(x) e^{-i\frac{2\pi n x}{P}} dx,
\end{equation}
approximates $z(x)$ as
\begin{equation}
    \lim_{N\rightarrow\infty} z_N(x) = z(x).
\end{equation}
\end{theorem}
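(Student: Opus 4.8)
The plan is to recognize Theorem~\ref{th:fourier} as the classical convergence statement for Fourier series and to establish it in the Hilbert space $L^2([a,b])$, where the argument is cleanest, before addressing the pointwise behaviour that the hypothesis on discontinuities invites. First I would fix the normalization and observe that the family $\{e_n(x) = P^{-1/2} e^{i 2\pi n x / P}\}_{n\in\mathbb{Z}}$ is orthonormal on $[a,b]$, so that the stated coefficients are exactly the inner products $c_n = P^{-1/2}\langle e_n, z\rangle$ and the truncation $z_N$ is the orthogonal projection of $z$ onto $V_N = \mathrm{span}\{e_n : |n|\le N\}$. This immediately identifies $z_N$ as the best $L^2$-approximation to $z$ among trigonometric polynomials of degree at most $N$, and Bessel's inequality controls the coefficients.

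The core of the argument is then the completeness of the trigonometric system: I would show that trigonometric polynomials are dense in $L^2([a,b])$. The cleanest route is Fej\'er's theorem — the Ces\`aro means $\sigma_N z = \frac1{N+1}\sum_{M=0}^N z_M$ are convolutions of $z$ with the non-negative Fej\'er kernel, which forms an approximate identity, so $\sigma_N z \to z$ uniformly when $z$ is continuous and in $L^2$ after approximating $z$ by continuous functions. Since each $\sigma_N z$ is itself a trigonometric polynomial, this yields density; equivalently one may invoke Stone--Weierstrass on the circle. Combined with the projection property, density forces $\|z - z_N\|_{L^2} \to 0$, i.e. mean-square convergence together with Parseval's identity. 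This settles the convergence in Eq.~\eqref{eq:fourier_exp_complex} in the $L^2$ sense for any square-integrable $z$, which in particular covers bounded functions with finitely many finite jumps.

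To obtain the stated \emph{pointwise} limit I would pass to the Dirichlet kernel representation $z_N(x) = \int_a^b z(t) D_N(x-t)\,dt$ with $D_N(u) = \frac{1}{P}\frac{\sin((2N+1)\pi u/P)}{\sin(\pi u/P)}$, localize the integral near $u=0$, and apply the Riemann--Lebesgue lemma to discard the contribution away from the diagonal. Under the hypothesis of finitely many finite discontinuities — which I would read as piecewise smoothness, or bounded variation — the Dirichlet--Jordan test then gives convergence at every point, to $z(x)$ at points of continuity and to the average $\tfrac12(z(x^+)+z(x^-))$ at each jump.

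The main obstacle is the mode of convergence and the delicacy at discontinuities, rather than any single calculation. Pointwise convergence is genuinely subtle: for merely integrable $z$ the Fourier series can diverge, so one cannot upgrade $L^2$ convergence to pointwise for free, and the result leans essentially on the extra regularity implicit in ``finitely many finite discontinuities.'' The deepest available statement, almost-everywhere convergence for $L^2$ functions, is Carleson's theorem~\cite{fourier-carleson1966}, which I would cite rather than reprove. For the purposes of the single-qubit approximant the operative statement is $L^2$ convergence, and the completeness step above is all that is strictly needed.
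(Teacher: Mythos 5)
The paper does not actually prove Theorem~\ref{th:fourier}: it is quoted as a classical result, with the references to Dirichlet, Riemann, Carleson and Tur\'an standing in for a proof, and the paper's own analytical effort is reserved for the quantum analogues in the appendices. Your outline is therefore not competing with an argument in the paper but supplying one, and it is the standard, correct one: orthonormality of $\{e^{i2\pi nx/P}\}$ identifies $z_N$ as the orthogonal projection onto $V_N$, Fej\'er (or Stone--Weierstrass) gives density of trigonometric polynomials and hence $L^2$ convergence via Parseval, and the Dirichlet-kernel localization plus the Dirichlet--Jordan test upgrades this to pointwise convergence under the regularity the hypothesis is meant to encode. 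Two things your proof buys that the bare citation does not: first, it makes visible that the conclusion $\lim_N z_N(x)=z(x)$ is literally false at a jump, where the limit is $\tfrac12\bigl(z(x^+)+z(x^-)\bigr)$; second, it makes explicit that ``integrable with a finite number of finite discontinuities'' is not by itself sufficient for pointwise convergence even at points of continuity (du Bois-Reymond's example of a continuous function with a divergent Fourier series at a point), so the hypothesis must be read as piecewise monotonicity or bounded variation, as you do. Your closing observation is also the right one for this paper: the downstream use of the theorem in Theorem~\ref{th:q_fourier} only needs the $L^2$ (or Ces\`aro/uniform-on-continuity) convergence, for which the completeness step alone suffices and none of the pointwise delicacy is required.
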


Now we present an extension of Fourier series to a quantum circuit as explicited in Def.~\ref{def:prod_general}. First, we define the Fourier gate $U^\mathcal F$:

\begin{definition}\label{def:fourier_gate}
Let the fundamental Fourier gate $U^\mathcal F$ be
\begin{multline}\label{eq:unitary_f}
 U^{\mathcal{F}}(x; \underbrace{\omega, \alpha, \beta, \varphi, \lambda}_{\vec\theta}) = R_z\left(\alpha + \beta\right) R_y(2\lambda)\times\\R_z\left(\alpha - \beta\right) R_z(2\omega x) R_y(2\varphi),
\end{multline}
with $\alpha, \beta, \varphi, \lambda, \omega \in \mathbb{R}$.
\end{definition}

Intuitively, $\alpha, \beta, \varphi, \lambda$ are related to the coefficients of a single Fourier step, while $\omega$ may be identified as the corresponding frequency.
The relationship between these parameters and the original Fourier coefficients is explicitly shown in Appendix~\ref{app:fourier}.

\begin{theorem}\label{th:q_fourier}
{\bf Quantum Fourier series} \\
Let $f,\phi$ be any pair of functions $f: \mathbb{R} \rightarrow [0, 1]$ and $\phi: \mathbb{R} \rightarrow [0, 2\pi)$ , such that $z(x) = f(x)e^{i\phi(x)}$ is a complex function with a finite number of finite discontinuities integrable within an interval $[a, b] \in \mathbb{R}$ of length $P$.
Then, there exists a set of parameters $\lbrace\vec\theta_1, \vec\theta_2, \ldots, \vec\theta_N \rbrace$ such that
\begin{equation}
     \bra 1 \prod_{i=1}^N U^\mathcal{F}(x, \vec\theta_i) \ket 0 = z_N(x), 
\end{equation}
with $z_N(x)$ the $N$-terms Fourier series.
\end{theorem}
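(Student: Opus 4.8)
The plan is to track how the independent variable enters the circuit and to reduce the claim to a statement about which trigonometric polynomials arise as a single matrix element of a product of $SU(2)$ rotations. First I would use the fact that $x$ appears in $U^{\mathcal F}$ only through the factor $R_z(2\omega x) = \mathrm{diag}(e^{-i\omega x}, e^{i\omega x})$, while all remaining factors $R_z(\alpha\pm\beta)$, $R_y(2\lambda)$ and $R_y(2\varphi)$ are $x$-independent. Writing $t = e^{i\omega x}$ and grouping each gate as $U^{\mathcal F}_i = A_i\,\mathrm{diag}(t^{-1},t)\,B_i$ with $A_i,B_i$ constant, the product $\prod_{i=1}^N U^{\mathcal F}_i$ has entries that are Laurent polynomials in $t$ whose degree is bounded by the number of gates. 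Hence the target matrix element $\bra{1}\prod_{i=1}^N U^{\mathcal F}(x,\vec\theta_i)\ket{0}$ is automatically a trigonometric polynomial in $\omega x$, i.e. of the form $\sum_n a_n e^{i n \omega x}$, which is precisely the structural shape of the Fourier partial sum $z_N$ once we fix the base frequency $\omega = 2\pi/P$.

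The second step is to pin down the accessible frequencies and coefficients. A direct computation for one gate gives $\bra{1} U^{\mathcal F}\ket{0} = \sin\lambda\cos\varphi\, e^{i\beta} e^{-i\omega x} + \cos\lambda\sin\varphi\, e^{i\alpha} e^{i\omega x}$, showing that a single layer contributes a conjugate pair of modes with independently tunable magnitudes and phases. I would then run an induction on the number of layers through the transfer-matrix recursion $W_{k+1} = U^{\mathcal F}_{k+1} W_k$: the new column $(\bra{0} W_{k+1}\ket{0},\, \bra{1} W_{k+1}\ket{0})$ is produced from the old one by a matrix of degree one in $t$, so the maximal harmonic grows by one and a fresh quintuple $(\omega,\alpha,\beta,\varphi,\lambda)$ of parameters enters. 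Carrying the inductive hypothesis that the degree-$k$ matrix element can be tuned to match an arbitrary admissible coefficient set $\{c_n\}_{|n|\le k}$, one selects the new parameters to fix the two extreme coefficients $c_{\pm(k+1)}$ while correcting the interior ones, thereby realizing $z_{k+1}$. A parameter count ($5$ real parameters per layer against the $2(2N+1)$ real components of the Fourier coefficients) is consistent with sufficiency, and fixing $\omega=2\pi/P$ in every layer aligns the harmonics with $e^{i 2\pi n x/P}$.

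The main obstacle is the surjectivity of the parameter-to-coefficient map under the constraint imposed by unitarity. Since $\prod_i U^{\mathcal F}_i$ is unitary, its $(1,0)$ entry obeys $|\bra{1} \prod U^{\mathcal F}\ket{0}|\le 1$ pointwise, and, more restrictively, the achievable Laurent polynomials are exactly those admitting a complementary polynomial that completes the column to unit norm, a constraint of the same type that appears in quantum signal processing. I would dispatch the pointwise bound using the remark following Eq.~\eqref{eq:aprox}: because $f\in[0,1]$ forces $|z(x)|\le 1$, a harmless global rescaling (uniform in $N$, since the Gibbs overshoot is bounded) makes $\|z_N\|_\infty\le 1$, so the target never violates the norm constraint. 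The genuinely delicate point is to show the inductive step can always be solved for the new parameters inside the admissible set; this is where I expect the real work to lie, and it is cleanest to carry out by exhibiting explicit relations between $(\omega,\alpha,\beta,\varphi,\lambda)$ and the Fourier coefficients and inverting them layer by layer. A further subtlety I would flag is a possible parity restriction on the accessible harmonics when a single common frequency is used in all layers, which, if it arises, is removed either by allowing the per-layer frequencies to differ or by taking the base frequency to be a suitable submultiple of $2\pi/P$.
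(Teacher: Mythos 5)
Your route is essentially the paper's. The paper also writes $U^{\mathcal F}$ out as an explicit matrix, absorbs the $x$-independent factors into coefficients $a_\pm,b_\pm$ so that each gate contributes a conjugate pair of modes $e^{\pm i\omega x}$, and then runs exactly your transfer-matrix induction: the base case is a single gate with $\omega=0$ (the constant term), and the inductive step yields the recursion of Eq.~\eqref{eq:induction} expressing the new coefficients $\tilde A_n,\tilde B_n$ in terms of the old ones and the five fresh parameters, with the maximal harmonic growing by one per layer. On the structural half of the argument --- that $\bra{1}\prod_i U^{\mathcal F}(x,\vec\theta_i)\ket{0}$ is a trigonometric polynomial of degree $N$ whose extreme coefficients are controlled by the newest gate --- you and the paper coincide.

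The step you flag as ``where the real work lies'' is a genuine gap, and you should be aware that the paper does not close it either. Theorem~\ref{th:fourier_2} in Appendix~\ref{app:fourier} establishes only that the matrix entries \emph{have the form} of a truncated Fourier series with recursively generated coefficients; it never shows that the map from $\lbrace\vec\theta_i\rbrace$ to the coefficient vector is onto the set of Laurent polynomials compatible with unitarity, nor that the specific Fourier coefficients $c_n$ of a given $z$ lie in its image. Without that surjectivity statement (which is of the quantum-signal-processing type you describe: the achievable $(1,0)$ entries are those admitting a complementary polynomial completing the column to unit norm), the equality $\bra{1}\prod_i U^{\mathcal F}\ket{0}=z_N(x)$ is made plausible by parameter counting but not proved. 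Your parity worry is also well founded: with one common frequency $\omega$ in every layer, $N$ gates produce only harmonics $m\omega$ with $m\equiv N \pmod 2$, so all $2N+1$ consecutive integer harmonics cannot be reached; the paper sidesteps this by fixing the shifted frequencies $\Omega_n=(2n+1)\pi/2$, which is a different (though still complete) exponential system from the $e^{i2\pi nx/P}$ announced in the theorem statement, and your proposed fixes (per-layer frequencies or a submultiple base frequency) are the natural repairs. In short, your proposal reproduces the paper's argument faithfully, and the two points you single out as delicate are precisely the points at which the paper's own proof is incomplete.
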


When the building blocks are the $U^\mathcal{F}(x, \theta_i)$ defined in Eq.~\eqref{eq:unitary_f}, the unitary operation as defined in Eq.\eqref{eq:prod_general} generates a total unitary gate that outputs a $N$-term Fourier series when applied to an initial state $\ket 0$. Taking $\ket 0$ as the initial state implies no loss of generality, since we can transform $\ket 0$ into any other initial state by adjusting the first $U^\mathcal{F}$. The Fourier series behavior is only achieved if all $\lbrace\vec\theta_i\rbrace$ take specific values leading to a final result that exactly matches the Fourier coefficients. However, since this procedure relies on quantum-classical variational methods, we will look for the optimal parameters by means of a classical optimizer. This freedom gives room to configurations surpassing the performance of the standard Fourier series, especially for shallow circuits. However, the recipe to construct the Fourier series by performing well-defined calculations is instead lost. For details on the proof of this theorem we refer the reader to Appendix~\ref{app:fourier}.

\subsubsection*{Quantum UAT}
The Universal Approximation Theorem (UAT) demonstrates that any continuous function of a $m$-dimensional variable can be uniformly approximated as a sum of a specific set of functions with adjustable parameters. The first formulation restricted the functions to be sigmoidal functions~\cite{uat-cybenko1989}. Later works extended the result to any non-constant bounded continuous function~\cite{uat-hornik1991}. This theorem is directly applied to neural networks containing one hidden layer. 

\begin{theorem}\label{th:UAT}
{\bf Universal Approximation Theorem}~\cite{uat-cybenko1989, uat-hornik1991, uat-leshno1993}
Let $I_m$ denote the $m$-dimensional cube $[0, 1]^m$. The space of continuous functions on $I_m$ is denoted by $C(I_m)$, and we use $|\cdot|$ to denote the uniform norm of any function in $C(I_m)$. 
Let $\sigma:\mathbb R \rightarrow \mathbb{R}$ be any non-constant bounded continuous function. 
Given a function $f \in C(I_m)$ there exists an integer $N$ and a function 
\begin{equation}\label{eq:UAT}
    G(\vec x) = \sum_{n=1}^N \alpha_n \sigma(\vec w_n \cdot \vec x + b_n),
\end{equation}
such that 
\begin{equation}
    |G(\vec x) - f(\vec x)| < \varepsilon, \qquad \forall \vec x \in I_m,
\end{equation}
for $\vec w_n\in \mathbb{R}^m$ and $b_n, \alpha_n \in \mathbb{R}$ for any $\varepsilon > 0$.
\end{theorem}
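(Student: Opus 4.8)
The plan is to recast the claim as a density statement. Let $\mathcal{S}\subset C(I_m)$ be the linear span of all \emph{ridge functions} $\vec x \mapsto \sigma(\vec w\cdot\vec x + b)$ with $\vec w\in\mathbb{R}^m$ and $b\in\mathbb{R}$. Producing, for every $f\in C(I_m)$ and every $\varepsilon>0$, a finite sum $G$ of the form~\eqref{eq:UAT} with $|G-f|<\varepsilon$ is precisely the assertion that $\mathcal{S}$ is dense in $C(I_m)$ under the uniform norm; the integer $N$ then appears automatically, since any element of $\mathcal{S}$ is by construction a finite linear combination. So I would aim to show $\overline{\mathcal{S}} = C(I_m)$.

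First I would argue by contradiction using the duality of $C(I_m)$. If $\overline{\mathcal{S}}$ were a proper closed subspace, the Hahn--Banach theorem would furnish a nonzero bounded linear functional $L$ vanishing on $\mathcal{S}$, and the Riesz representation theorem would identify $L$ with a finite signed regular Borel measure $\mu\neq 0$ on the compact cube $I_m$. The annihilation condition then reads
\begin{equation}
    \int_{I_m} \sigma(\vec w\cdot\vec x + b)\, d\mu(\vec x) = 0 \qquad \text{for all } \vec w\in\mathbb{R}^m,\ b\in\mathbb{R}.
\end{equation}
The whole content of the theorem is thus reduced to showing that a non-constant bounded continuous $\sigma$ is \emph{discriminatory}: that this family of integral constraints forces $\mu=0$, contradicting $\mu\neq 0$ and closing the argument.

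Establishing this discriminatory property is the step I expect to be the main obstacle, and it is where the hypotheses on $\sigma$ enter. For a sigmoidal $\sigma$ I would replace $(\vec w, b)$ by $(\lambda\vec w, \lambda b)$ and let $\lambda\to\infty$: the rescaled ridge function converges pointwise to the indicator of the half-space $\{\vec w\cdot\vec x + b>0\}$ away from the bounding hyperplane, so dominated convergence gives $\mu$ zero mass on every such half-space, hence on every slab, and injectivity of this hyperplane (Radon-type) transform then yields $\mu=0$. For the general non-constant bounded continuous case I would instead pass to Fourier variables: the constraints say that $\mu$ convolved against every dilate and translate of $\sigma$ vanishes, and because $\sigma$ is non-constant its Fourier transform is not concentrated at the origin alone, which lets one conclude $\widehat{\mu}\equiv 0$ and therefore $\mu=0$. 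The delicate point here is that $\sigma$ is only bounded and not integrable, so this step has to be carried out in the language of tempered distributions rather than with an ordinary Fourier transform.

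An alternative route I would keep in reserve bypasses the duality argument and shows $\overline{\mathcal{S}}$ contains all polynomials directly. Differentiating a ridge function in $b$, realised as a limit of difference quotients of elements of $\mathcal{S}$ and hence still in $\overline{\mathcal{S}}$, reduces the problem to derivatives $\sigma^{(j)}(\vec w\cdot\vec x+b)$; evaluating at a base point where $\sigma^{(j)}\neq 0$ and varying $\vec w$ generates every monomial in $\vec x$, so all polynomials lie in $\overline{\mathcal{S}}$, and Stone--Weierstrass then gives density in $C(I_m)$. In either route the remaining bookkeeping is routine: uniform approximation on the compact cube to accuracy $\varepsilon$ selects a definite finite $N$ together with the weights $\vec w_n$, biases $b_n$ and coefficients $\alpha_n$ of~\eqref{eq:UAT}.
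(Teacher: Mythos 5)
Your proposal is correct and follows essentially the same route the paper itself relies on (see its Appendix proof of the quantum analogue, Theorem~\ref{th:uat_2}): Cybenko's duality argument, in which Hahn--Banach and the Riesz representation theorem produce a nonzero annihilating measure $\mu$, and the discriminatory property of $\sigma$ --- obtained via half-space limits and the vanishing of the Fourier transform of $\mu$, with bounded convergence --- forces $\mu = 0$. The paper only cites Theorem~\ref{th:UAT} rather than reproving it, but the Hahn--Banach, Riesz, and Lebesgue bounded-convergence steps you outline are exactly the ones it invokes.
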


This theorem is an existence theorem, and thus it does not specify how many terms from Eq.~\eqref{eq:UAT} are needed to achieve an accuracy $\varepsilon$. 
Note that the UAT can be immediately applied to complex functions by substituting the real-valued function $\sigma(\cdot)$ with some complex-valued function. In particular, it works if $\sigma(\cdot) \rightarrow e^{i (\cdot)}$. A proof is shown in Appendix~\ref{app:real_to_complex_uat}.

We can now translate the UAT to the proposed quantum circuit by defining the fundamental single-qubit gate.

\begin{definition}
Let the fundamental UAT gate $U^{\rm UAT}$ be
\begin{equation}\label{eq:unitary_uat}
 U^{\rm UAT}(\vec x; \underbrace{\vec \omega, \alpha, \varphi}_{\vec\theta}) = R_y(2\varphi)R_z(2\vec\omega\cdot\vec x + 2\alpha), 
\end{equation}
with $\lbrace \vec\omega, \alpha, \varphi \rbrace \in \lbrace \mathbb{R}^m, \mathbb{R}, \mathbb{R}\rbrace$.
\end{definition}

Intuitively, $\vec\omega$ and $\alpha$ are equivalent to the weights and bias in a neural network, while $\varphi$ plays the role of the coefficient.

\begin{theorem}\label{th:q_UAT}
{\bf Quantum UAT} \\
Let $f,\phi$ be any pair of functions $f: I_m \rightarrow [0, 1]$ and $\phi: I_m \rightarrow [0, 2\pi)$ , such that $z(\vec x) = f(\vec x)e^{i\phi(\vec x)}$ is a complex continuous function on $I_m$, with $I_m = [0, 1]^m$. Then there is an integer $N$ and a set of parameters $\lbrace\vec\theta_1, \vec\theta_2, \ldots, \vec\theta_N \rbrace$ such that
\begin{equation}
    \left\vert f(\vec x)e^{i\phi(\vec x)} -  \bra 1 \prod_{i=1}^N U^{\rm UAT}(\vec x, \vec\theta_i) \ket{0} \right\vert < \epsilon,
\end{equation}
for any $\epsilon > 0$. 
\end{theorem}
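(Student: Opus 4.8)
The plan is to reduce the claim to the complex form of the classical UAT and then to recognize that the circuit's matrix element is \emph{exactly} a finite sum of complex exponentials of affine arguments, i.e.\ of the UAT form. First I would invoke the complex UAT (the $\sigma\to e^{i(\cdot)}$ version established in Appendix~\ref{app:real_to_complex_uat}): since $z(\vec x)=f(\vec x)e^{i\phi(\vec x)}$ is continuous on the compact cube $I_m$, for any $\epsilon>0$ there exist $M$, coefficients $\alpha_n$, frequencies $\vec w_n$ and biases $b_n$ with
\begin{equation}
\left| z(\vec x)-\sum_{n=1}^{M}\alpha_n\, e^{i(\vec w_n\cdot\vec x+b_n)}\right|<\tfrac{\epsilon}{2},\qquad \forall \vec x\in I_m .
\end{equation}
Writing $\alpha_n=|\alpha_n|e^{i\psi_n}$ and absorbing $\psi_n$ into $b_n$, I may take every $\alpha_n>0$. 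It then suffices to realize this sum $G(\vec x)$ to accuracy $\epsilon/2$ as $\bra{1}\prod_i U^{\rm UAT}(\vec x,\vec\theta_i)\ket{0}$.

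Next I would expand the matrix element over computational paths. Using $R_z(2\theta)=\mathrm{diag}(e^{-i\theta},e^{+i\theta})$ with $\theta_k=\vec\omega_k\cdot\vec x+\alpha_k$, together with the explicit form of $R_y(2\varphi_k)$, each gate $U^{\rm UAT}_k$ is a $2\times2$ matrix whose column-$0$ entries all carry the phase $e^{-i\theta_k}$ and whose column-$1$ entries all carry $e^{+i\theta_k}$. Multiplying the $N$ gates and inserting resolutions of the identity between them gives
\begin{equation}
\bra{1}\prod_{k=1}^{N}U^{\rm UAT}_k\ket{0}=\sum_{\rm paths}A_{\rm path}\,e^{i(\vec W_{\rm path}\cdot\vec x+B_{\rm path})},
\end{equation}
where $\vec W_{\rm path}=\sum_k s_k\vec\omega_k$, $B_{\rm path}=\sum_k s_k\alpha_k$ with $s_k=\pm1$ fixed by the path, and each $A_{\rm path}$ is a product of factors $\pm\cos\varphi_k,\pm\sin\varphi_k$. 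The key structural observation is that this is already a finite UAT-type sum: exponentials of affine forms in $\vec x$ with tunable frequencies, phases and amplitudes.

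I would then match $G$ using the leading paths. The strings with exactly one $0\!\to\!1$ transition contribute $\sum_{k}\big(\prod_{j\ne k}\cos\varphi_j\big)\sin\varphi_k\,e^{i(\vec W_k\cdot\vec x+B_k)}$, and the partial-sum identities $\vec W_k=\vec W_{k-1}-2\vec\omega_k$, $B_k=B_{k-1}-2\alpha_k$ show that the frequency and phase of every such term can be set to any prescribed value through the increments $\vec\omega_k,\alpha_k$, while its amplitude is governed by $\varphi_k$. Choosing these so that the single-transition terms reproduce the $M$ terms of $G$ matches $G$ to leading order in the rotation angles.

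The hard part is the residual error, which has two sources: the amplitudes $A_{\rm path}$ are coupled (they share the $\varphi_k$), and paths with three or more transitions add spurious exponentials. My plan is to pass to the small-angle regime, splitting each coefficient $\alpha_n$ across many gates so that every $\varphi_k\to0$ while the corresponding single-transition contributions still sum to $\alpha_n$; then the $\cos\varphi_k\to1$ corrections vanish and the multi-transition paths are formally of higher order in the $\varphi_k$. The delicate point, and the step I expect to be the main obstacle, is that the \emph{total} rotation is necessarily of order one whenever $\sum_n\alpha_n$ is not small (as is generic, since $G\approx z$ and $|z|$ may approach $1$), so the higher-transition paths cannot be dismissed termwise and must be resummed; controlling this resummation, for instance by arranging the frequency increments so that the unwanted exponentials do not resonate with the $M$ target frequencies, or via a controllability/inverse-function argument around the trivial point $\varphi_k=0$, is what makes the bound below $\epsilon/2$ nontrivial. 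Two facts keep the argument consistent and make an existence-only conclusion (as in the classical UAT) attainable: the matrix element automatically satisfies $|\bra{1}\prod_i U^{\rm UAT}\ket{0}|\le1$, matching $|z|\le1$; and since no convergence rate is demanded, the argument can be closed by a limiting procedure in the number of gates once the higher-order contributions are shown to be controllable.
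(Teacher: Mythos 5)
Your reduction to the complex UAT and your structural observation that $\bra{1}\prod_k U^{\rm UAT}_k\ket{0}$ expands over computational paths into a finite sum $\sum_{\rm paths}A_{\rm path}\,e^{i(\vec W_{\rm path}\cdot\vec x+B_{\rm path})}$ are both correct, and the latter is exactly the expansion the paper derives (its Eq.~\eqref{eq:B_N}). The gap is where you place it, and it is genuine rather than merely delicate: your matching strategy requires the $M$ single-transition paths to carry the full weight $\sum_n|\alpha_n|$ of the target expansion $G$, which forces $\sum_k|\sin\varphi_k|$ to be of order one or larger; the remaining $2^{N-1}-O(N)$ multi-transition paths then have total amplitude that is \emph{not} higher order, and none of the escape routes you sketch closes this. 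Splitting each $\alpha_n$ across many small-angle gates keeps the total rotation fixed, so the three-transition contribution scales like $\bigl(\sum_k|\varphi_k|\bigr)^3$ and does not vanish; a non-resonance arrangement of frequencies cannot rescue a \emph{uniform} (sup-norm) bound, because the spurious exponentials still contribute $O(1)$ pointwise; and an inverse-function argument around $\varphi_k=0$ fails because the linearization at that point only reaches functions of small sup-norm, while $|z|$ may approach $1$. As written, the proposal therefore does not establish the theorem.

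The paper sidesteps this obstacle by being nonconstructive in precisely the place where you try to be constructive. After deriving the same path expansion for $B_N$, it does not attempt to match a prescribed $G$ term by term; instead it runs Cybenko's duality argument on the set $S$ of achievable outputs: if the closure of $S$ were proper, Hahn--Banach would produce a nonzero functional annihilating it, the Riesz representation theorem converts that functional into a signed measure $\mu$, and since the achievable outputs force $\int e^{i\vec w\cdot\vec x}\,d\mu=0$ for arbitrary $\vec w$, the Fourier transform of $\mu$ vanishes and hence $\mu=0$, a contradiction. Density of $S$ then yields the existence of $N$ and parameters without ever controlling the spurious paths. To repair your argument you would need either to import this duality step, or to supply a genuinely new mechanism (for instance an exact characterization of which coefficient vectors $\{A_{\rm path}\}$ are reachable); the resummation you defer is the missing idea, not a technicality.
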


This theorem is analogous to the classical one, and one can arrive at its proof by following the steps developed in Ref.~\cite{uat-cybenko1989}. All theorems supporting the original formulation of the UAT also hold for the quantum version. For more details on the demonstration of the quantum UAT, we refer the reader to Appendix~\ref{app:uat}.

\subsubsection*{Differences between approaches}

The quantum universality theorems proposed here inherit the range of applicability, advantages and limitations of their classical counterparts. The Fourier approach is guaranteed to work for all integrable functions with a finite number of finite discontinuities. This range of functions includes --but is not limited to-- continuous functions. The UAT only gives support to continuous functions, which is useful from a practical perspective, but less robust than the Fourier series.

The Fourier theorem holds for functions depending on a single variable. However, the extension to multi-dimensional spaces is complicated and requires a space of parameters whose size increases exponentially with the number of dimensions~\cite{fourier-riemann1867}. However, in the UAT case the use of multi-variable $\vec x$ arises naturally by adjusting the dimension of the weights.

\section{Numerical experiments}\label{sec:benchmark}

In this section, we numerically explore how the theorems explained in Sec.~\ref{sec:theory} perform in practice. We present two different kinds of benchmarks for real and complex functions, respectively. These benchmarks collect results using both $U^\mathcal{F}$ and $U^{\rm UAT}$ gates. 
Benchmarks are performed using simulations that include no decoherence, but do contain sampling uncertainty. We present simulations with up to 6 layers. 

The aim of this benchmark is to compare the results of quantum and classical methods. The classical Fourier representation can be obtained by following Theorem \ref{th:fourier}. In the UAT case, we follow the description from Theorem \ref{th:UAT} with $\sigma(\cdot)$ being a cosine for real functions and $e^{i (\cdot)}$ for complex functions. The parameters are found by employing specific classical optimization methods. For the quantum UAT case we take $H \ket 0 = \ket +$ as the initial state. 

All simulations are performed using the framework {\tt QIBO}~\cite{qibo}. The code computing the numerical experiments as well as the final results can be found on {\tt GitHub}~\cite{github}.

\subsubsection*{$Z$ benchmark for real functions}

For the first benchmark we consider a single-variable, real-valued function $-1 \leq f(x) \leq 1$ related to the observable $\langle Z \rangle \sim f(x)$. The quantum state we want to represent is then
\begin{equation}
    \ket{\psi(x)}_Z = \sqrt{\frac{1 + f(x)}{2}}\ket 0 + e^{i\phi}  \sqrt{\frac{1 - f(x)}{2}}\ket 1,
\end{equation}
where $\phi$ is a phase that in general may be $x$-dependent, but it is assumed constant at this stage. The $\chi^2$ function that drives the optimization is then
\begin{equation}
    \chi^2 = \frac{1}{M} \sum_{j=1}^M \left(\langle Z(x_j) \rangle - f(x_j) \right)^2,
\end{equation}
where $M$ is the total number of samples of $x$.

The $Z$ benchmark is first tested against four different functions of interest
\begin{eqnarray}\label{eq:functions}
    \relu(x) &=& \max(0, x),\\\label{eq:functions1}
    \tanh(a  x) & \; {\rm for} \; & a = 5, \\\label{eq:functions2}
    \step(x) &=& x / \vert x \vert;\quad 0\; {\rm if}\; x=0, \\\label{eq:functions3}
    \poly (x) &=& \vert 3 x^3 (1 - x^4)\vert. 
\end{eqnarray}
All functions are conveniently rescaled to fit the limits $-1 \leq f(x) \leq 1$. In all cases, $x \in [-1, 1]$. The $\relu(\cdot)$ and $\tanh(\cdot)$ functions are chosen given the central role they play in the field of Machine Learning. $\step(\cdot)$ presents a discontinuity, which implies a challenge in the approximation. $\poly(\cdot)$ is chosen as it contains wavy features arising from  non-trigonometric functions. 

Next, we test our approach against four functions of two variables in order to check how the quality of the approximations evolves as more dimensions are added to the problem. Those are known 2D functions named {\tt adjiman, brent, himmelblau, threehump}~\cite{2d_functions}. These functions are chosen as representatives of a variety of difficulties the algorithm needs to overcome.
In the 2D case, the functions are 
conveniently rescaled to fit the limits $-1 \leq f(x) \leq 1$ and $(x, y) \in [-5, 5]^2$. A definition of these functions can be found in Appendix~\ref{app:2D_benchmark}.

In this benchmark, both the UAT and Fourier quantum and classical methods are considered for the one-dimensional functions. However, 2D functions are only tested for UAT methods since the theorems from Sec.~\ref{sec:theory} do not support multidimensional Fourier series.

\subsubsection*{$X-Y$ benchmark for complex functions}

In order to test the performarnce of the presented algorithm for fitting complex functions, we propose a tomography-like benchmark. Since complex functions have real and imaginary parts, one needs to measure at least two observables in the qubit space. In this case, we chose the observables to be $\langle X \rangle$ and $\langle Y \rangle$ for the real and imaginary parts, that is {$\langle X \rangle + i \langle Y \rangle \sim f(x) e^{i g(x)}$}. The quantum state that permits this identification is
\begin{multline}
    \ket{\psi(x)}_{XY}  = \sqrt{\frac{1 + \sqrt{1 - f(x)}}{2}} \ket 0 \\+ e^{i g(x)}\sqrt{\frac{1 - \sqrt{1 - f(x)}}{2}} \ket 1. 
\end{multline}
It is then possible to construct a $\chi^2$ function as 
\begin{equation}
    \chi^2 = \frac{1}{M} \sum_{j=1}^M \left\vert \langle X(x) \rangle + i \langle Y(x) \rangle - f(x)e^{i g(x)} \right\vert^2.
\end{equation}

For the $X-Y$ benchmark we test the algorithm against all possible combinations of real and imaginary parts of the functions defined in Eqs.~\eqref{eq:functions}--\eqref{eq:functions3}, conveniently renormalized to ensure that $\langle X \rangle^2 + \langle Y \rangle^2 \leq 1$.

\subsection{Optimization techniques}

An optimization process is required to find the optimal gate parameters. 
For the classical methods, we use standard optimization techniques~\cite{scikit-learn}. Since the number of parameters we deal with in this problem is relatively low compared to the parameter space obtained for instance in Deep Learning, the {\tt BFGS} and {\tt L-BFGS} algorithms~\cite{bfgs, l-bfgs} as implemented by {\tt scipy}~\cite{scipy} are used. These algorithms belong to the class of gradient-based optimizers.

For quantum methods, optimization brings more problems that are yet to be solved. In particular, the landscape of the loss function in the parameter space remains unknown, and thus it is hard to infer what kind of classical optimizers perform well for each particular problem. For this reason we look for the best-fit parameters using the aforementioned {\tt L-BFGS} algorithm and the genetic option {\tt CMA}~\cite{cma, cma-package}. Genetic algorithms explore vast regions of the parameter space and do not depend on gradients. However, they usually require more function evaluations to converge to the minimum. In this case it is not possible to guarantee that the solution found by any optimization algorithm is the global minimum of our loss function.

\section{Experimental implementation of the Approximation Theorems}\label{sec:experiment}

\begin{figure*}[t!]
    \centering
    \includegraphics[width=1.1\textwidth]{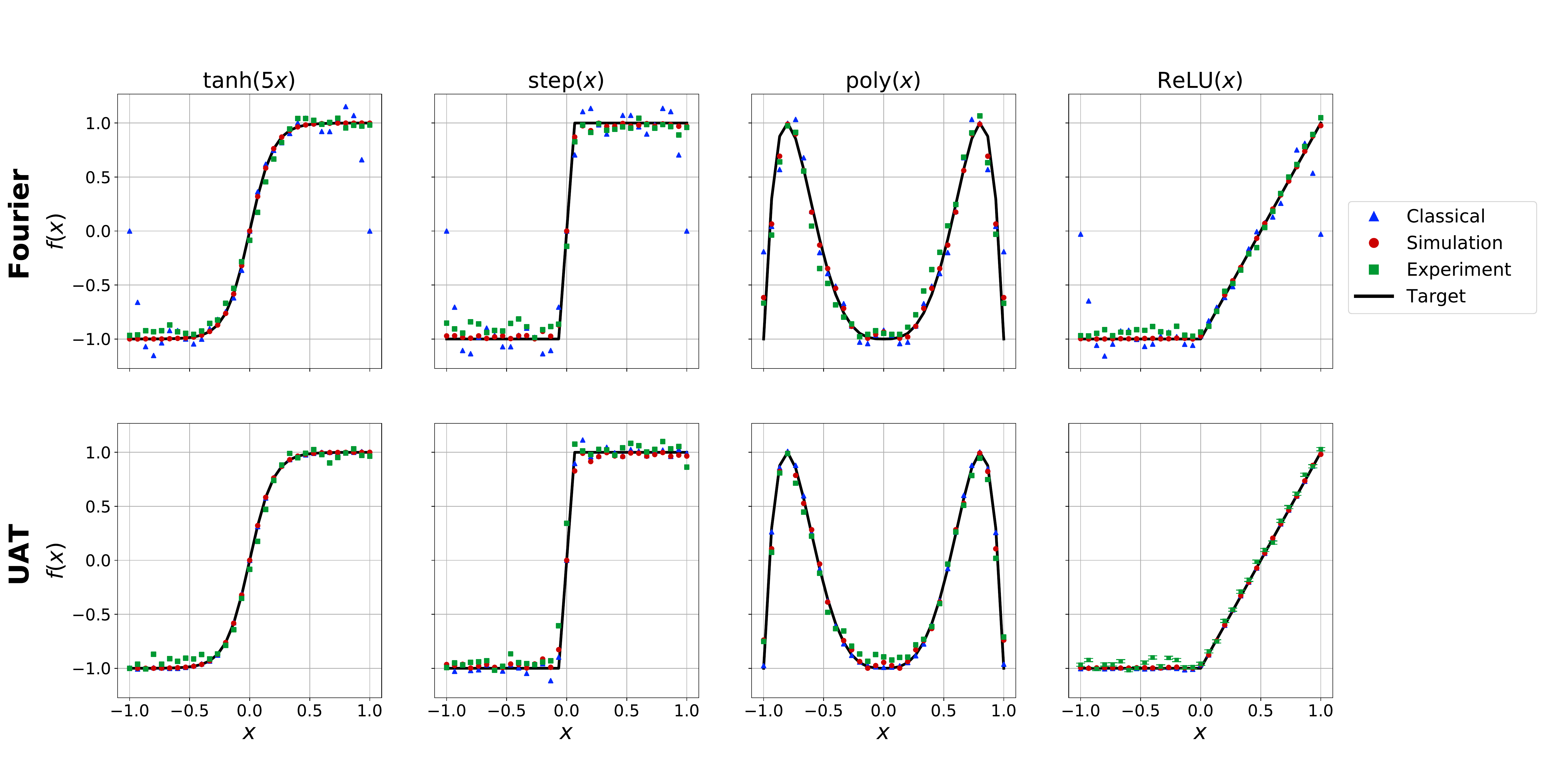}
    \caption{Fittings for four real-valued functions using the $Z$ benchmark with five layers. Blue triangles represent classical models, namely Fourier and UAT, while red dots represent its quantum counterparts computed using a classical simulator. Green squares are the experimental execution of the optimized quantum model using a superconducting qubit. The target function is plotted in black for comparison. The analysis for experimental errors is plotted for the $\relu$ function and the UAT model.}
    \label{fig:real_funs}

    \includegraphics[width=\linewidth]{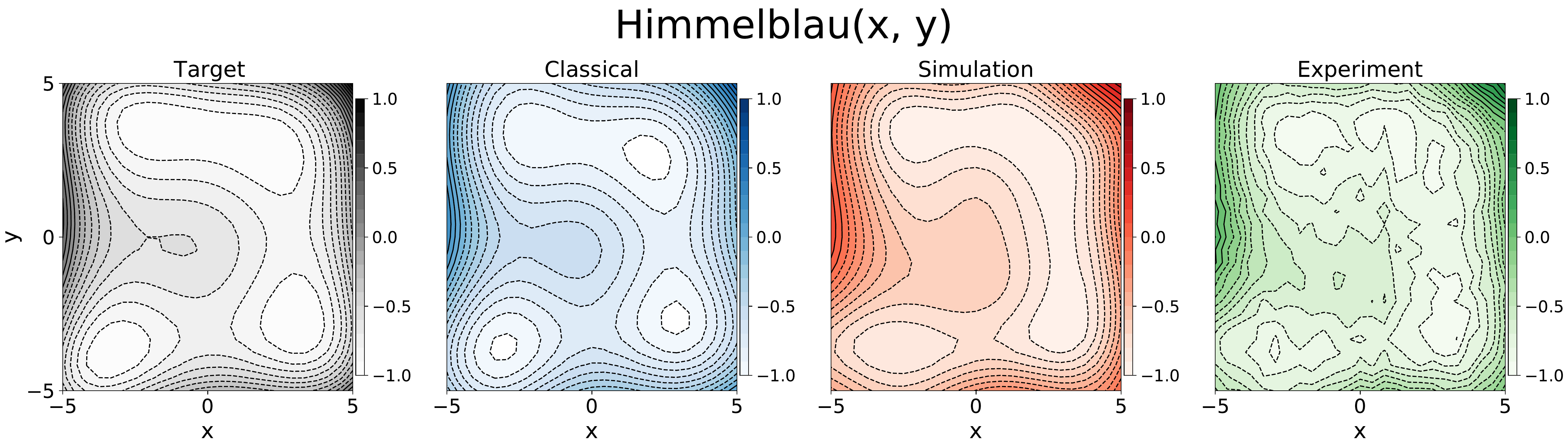}
    \caption{Fittings for the 2D function Himmelblau properly normalized using the $Z$ benchmark for five layers. The blue plot represents the classical UAT model, while the red plot represents its quantum counterpart simulated. The green plot is the experimental execution of the optimized quantum model. The target function is painted in black. In all drawings, the lines corresponds to the same levels in the $Z$ axis.}
    \label{fig:2d_himmelblau}
\end{figure*}

We implement the single-qubit universal approximant in a superconducting qubit circuit cooled to the base temperature of a dilution refrigerator (20mK). The qubit is a 3D transmon geometry~\cite{3d-transmon} located inside an aluminum three-dimensional cavity. The cavity bare frequency, $\omega_c = 2\pi \times 7.89$ GHz, is greatly detuned from the qubit frequency, $\omega_q = 2\pi \times 4.81 $ GHz. Hence, there is a qubit state-dependent dispersive shift on the cavity resonance, $2\vert\chi\vert = 2\pi \times 1.5 $MHz. The qubit anharmonicity is $\alpha = -2\pi \times 324$ MHz and the qubit relaxation and spin-echo decay times are, respectively, $T_1 = 15.6~\mu s$ and $T_{2Echo} = 12.0~\mu s$. These time scales exceed the operation times needed to implement the algorithm up to 6 layers by 2 orders of magnitude. Additional information on the experimental methods can be found in App. \ref{app:experiment}.

In order to implement the gate sequences defined in the previous section we follow the correspondence between logical and physical gates as shown in Fig.~\ref{fig:coh-times}c in App. \ref{app:experiment}. The phase of each pulse is selected at the pulse generator to modify the rotation axis, producing either X or Y rotations as required. The Z rotations are, in turn, virtual~\cite{virtual-zgates}. The microwave pulses incorporate a DRAG correction~\cite{drag-a,drag-b} which leads to an error per gate $\epsilon = 0.01$ found with randomized benchmarking~\cite{randomized-benchmarking}. Randomized benchmarking measures errors in Clifford gates and not arbitrary angle rotations, which are instead used in this experiment, yet offers a reasonable estimate on the overall fidelity of our gates. The gate error observed is probably limited due to a non-ideal filtering of the measurement lines in the fridge. In order to achieve better qubit state readout visibility and shorter operation times, a reset protocol is applied prior to the main sequence~\cite{reset-protocol}. 

\section{Results}\label{sec:results}

\begin{figure}[t!]
    \centering
    \includegraphics[width=.8\linewidth]{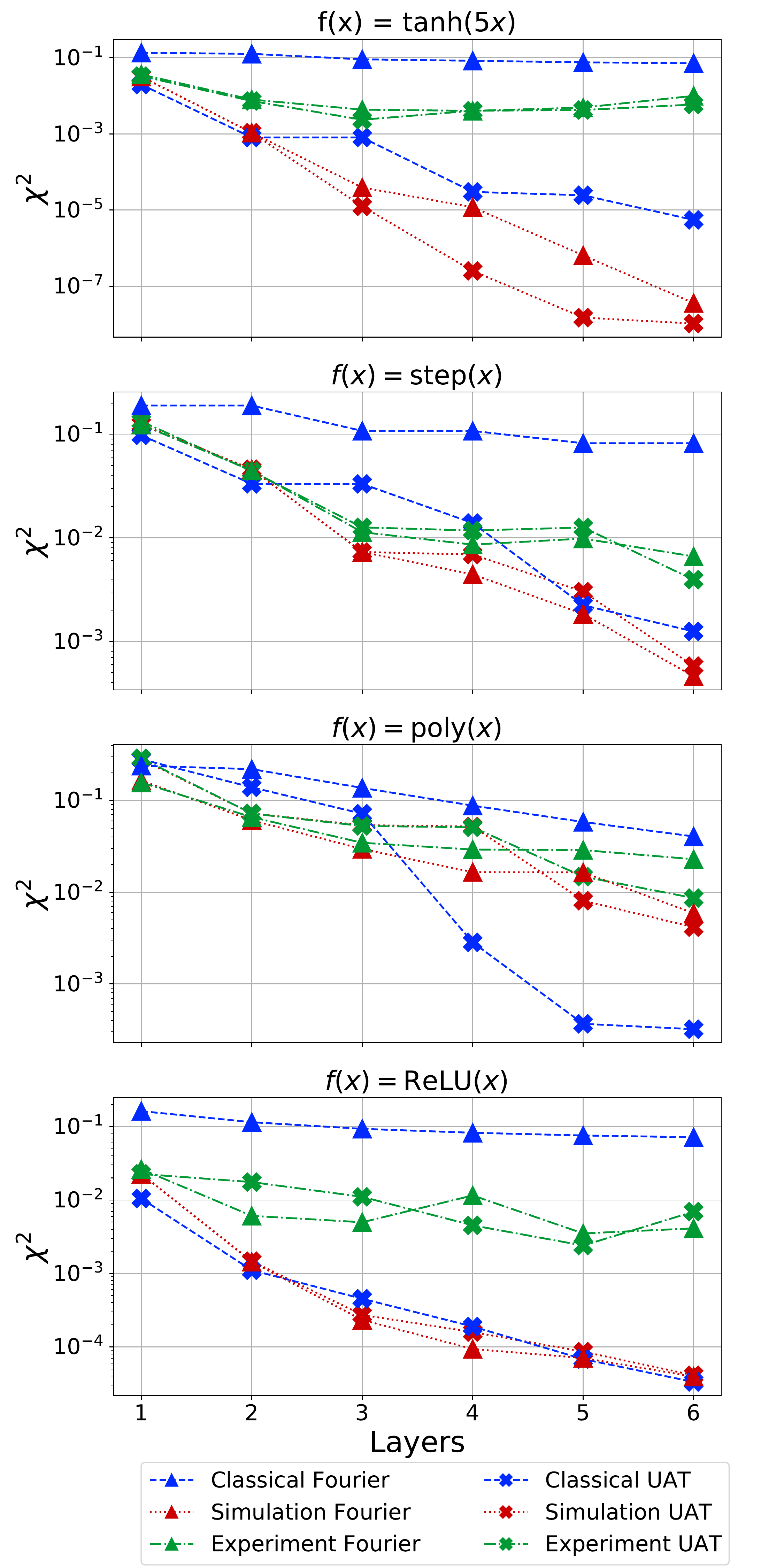}
    \caption{Values of $\chi^2$ for the $Z$ benchmark in all four test functions using classical computation (blue scatter), classical simulation of the quantum algorithm (red scatter) and experimental implementation with a superconducting qubit (green scatter). Fourier models are depicted with triangles, while UAT models are represented by crosses.}
    \label{fig:chi2_real}
\end{figure}

\begin{figure}[t!]
    \includegraphics[width=.8\linewidth]{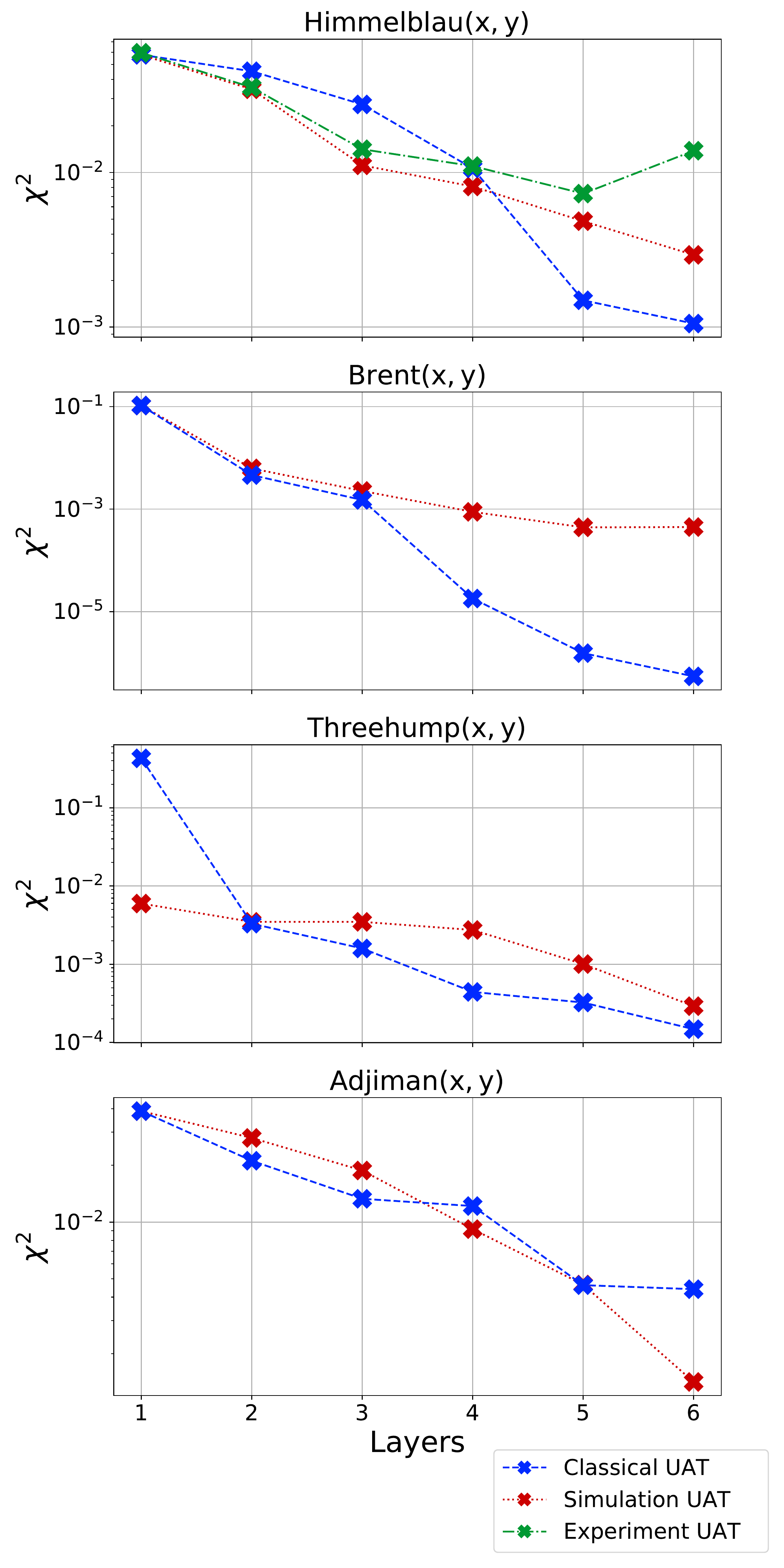}
    \caption{Values of $\chi^2$ for the $Z$ benchmark in all four test 2D functions using classical computation (blue scatter), classical simulation of the quantum algorithm (red scatter) and experimental implementation with a superconducting qubit (green scatter). Only UAT models are considered.}
    \label{fig:2d_chi}
\end{figure}

In all results presented in this section we provide three different final values. First, we use the Fourier and the UAT classical methods to approximate a target function. The Fourier method is obtained following the constructive recipe of Th.~\ref{th:fourier}. The UAT is applied using a single-hidden-layer Neural Network. Second, we approximate the same function using the quantum procedures defined in this work, simulating the wave function evolution with classical methods. In both cases, we retain the best outcome obtained with different initial conditions used in the optimization step. Finally, we use the parameters obtained using the simulation of the quantum procedure to execute that circuit in the actual superconducting quantum device. A specific set of $x$ values for the $\relu$ function from Eq.~\eqref{eq:functions} can be found in App.~\ref{app:experiment}. The theoretical optimal parameters may be, in principle, different than the experimental ones. Hence, an optimization performed directly on the experimental parameters could improve the final results \cite{future-work}.

We show in Fig.~\ref{fig:real_funs} the resulting fit for all four single-variable real-valued functions from Eqs.~\eqref{eq:functions}--\eqref{eq:functions3}. In this case the $Z$ benchmark with 5 layers is considered. A classical approximation (blue), a quantum exact simulation (red) and its experimental implementation (green) are depicted. 
All methods follow the overall shape of the target function. Classical Fourier approximations return less accurate predictions on the value of $f(x)$ due to the periodic nature of the model. The quantum Fourier and both classical and quantum UAT models return better results for all values of $x$. This behaviour is observed in all benchmarks. The experimental results retain the qualitative properties of the exact models, although a loss in performance is visible. In addition, an analysis of experimental uncertainties is also depicted at the UAT $\relu$ plot from Fig.~\ref{fig:real_funs}.

Figure~\ref{fig:2d_himmelblau} depicts the approximations obtained for the ${\rm Himmelblau (x, y)}$ function comparing the target function and all different methods considered. Figure~\ref{fig:2d_chi} summarizes the values of $\chi^2$ for all 2D-functions taken into account in this work. 

All different executions capture the overall shape of the function, but some differences exist in the different plots. Classical simulations return values for $Z < -1, Z > 1$, and thus lead to three minima in this case. On the other hand, the quantum simulation cannot clearly distinguish those minima. The experimental execution presents sharp contours because of the inherent noise and sampling uncertainty.

Figure~\ref{fig:chi2_real} shows a summary of the values of $\chi^2$ for classical and their analogous quantum simulated models and their experimental validation. In the case of classical and simulated quantum models a general trend towards better approximations --implying lower values of $\chi^2$-- is observed with an increasing numbers of layers.  

The simulated Fourier model performs better than its classical counterpart. This is due to the fact that a classical Fourier series does not contain tunable parameters, while its quantum version does.
However, the result from the classical Fourier series constitutes a lower bound for any approximation method based on optimization since at least the quality of the Fourier series is guaranteed. 

In the UAT case of Fig.~\ref{fig:chi2_real}, no approach returns better results. The classical algorithm performs better in the $\poly(x)$ case, but the results with the simulated quantum method improve the classical ones in the $\tanh(5x)$ case. Both models present similar trends as the number of layers increases. 

Despite the fact that the Fourier model contains more parameters than the UAT model, the latter performs better as seen in Fig.~\ref{fig:chi2_real}. Therefore, the UAT method seems more appropriate for the functions used here. 

The experimental realization of the quantum approximation models suffers from circuit noise and sampling uncertainties, and therefore degrades the quantity $\chi^2$. This is more prominent as more layers are added to the model. As a direct consequence, the approximation of the quantum model to the target function loses accuracy. The inherent sampling uncertainty sets a lower bound in the value of $\chi^2$ obtained through experiments.

In general, Fig.~\ref{fig:chi2_real} supports the claim that every layer grants the model more flexibility, and thus enhances the capability of fitting the target function. This flexibility is given by the number of re-uploadings of the independent variable and not by the amount of parameters. In addition, having too many parameters likely hinders the optimization procedure.

\begin{figure*}
    \centering
    \includegraphics[width=.6\linewidth]{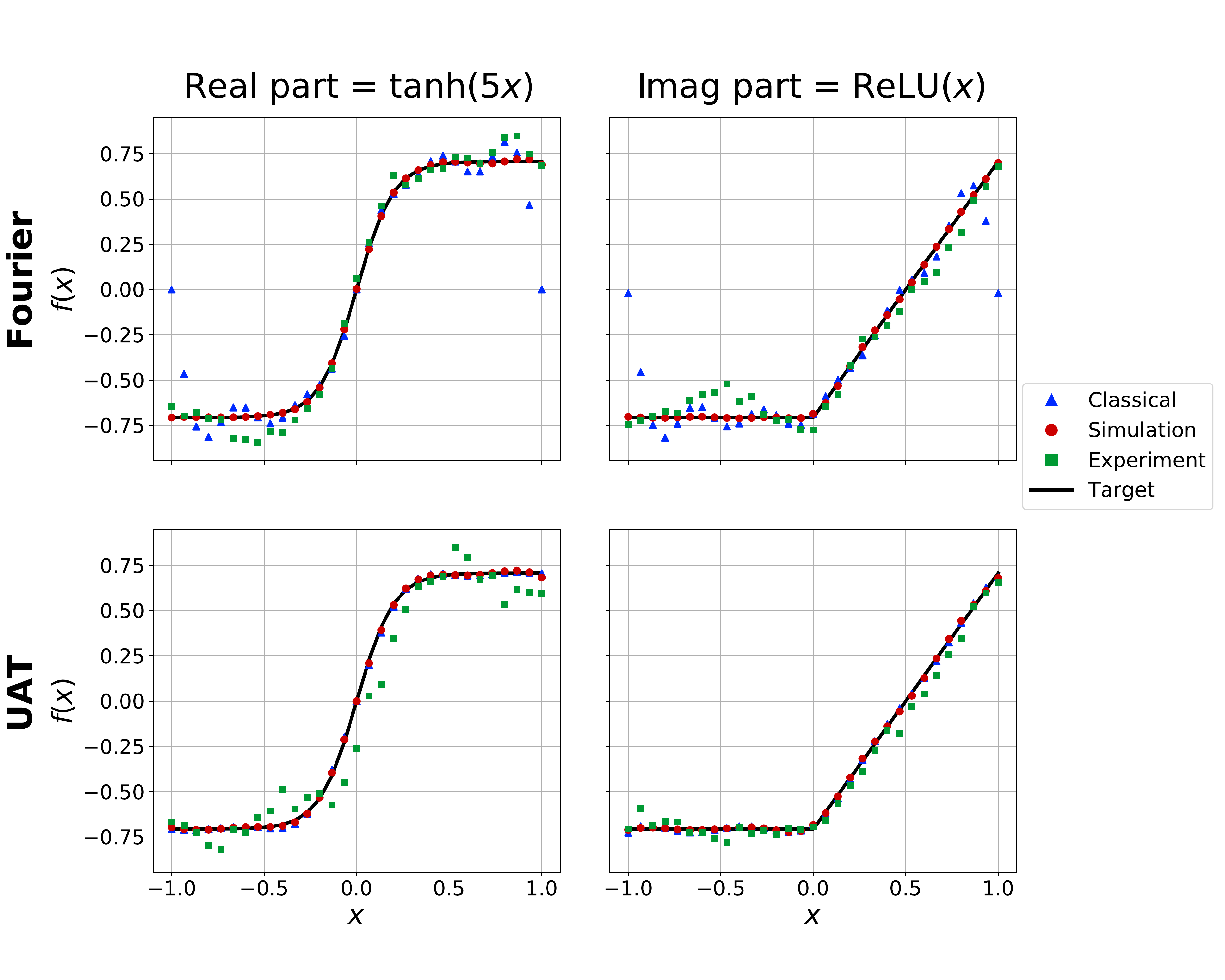}
    \caption{Fittings for the complex function $f(x) = \tanh(5x) + i \relu(x)$ properly normalized using the $X-Y$ benchmark for five layers. Blue triangles represent a classical model, while red dots represent its quantum counterparts computed using a classical simulator. Green squares are the experimental execution of the optimized quantum model using a superconducting qubit. The target function is plotted in black for comparison.}
    \label{fig:complex_funs}
    \vskip5mm
    \includegraphics[width=.85\textwidth]{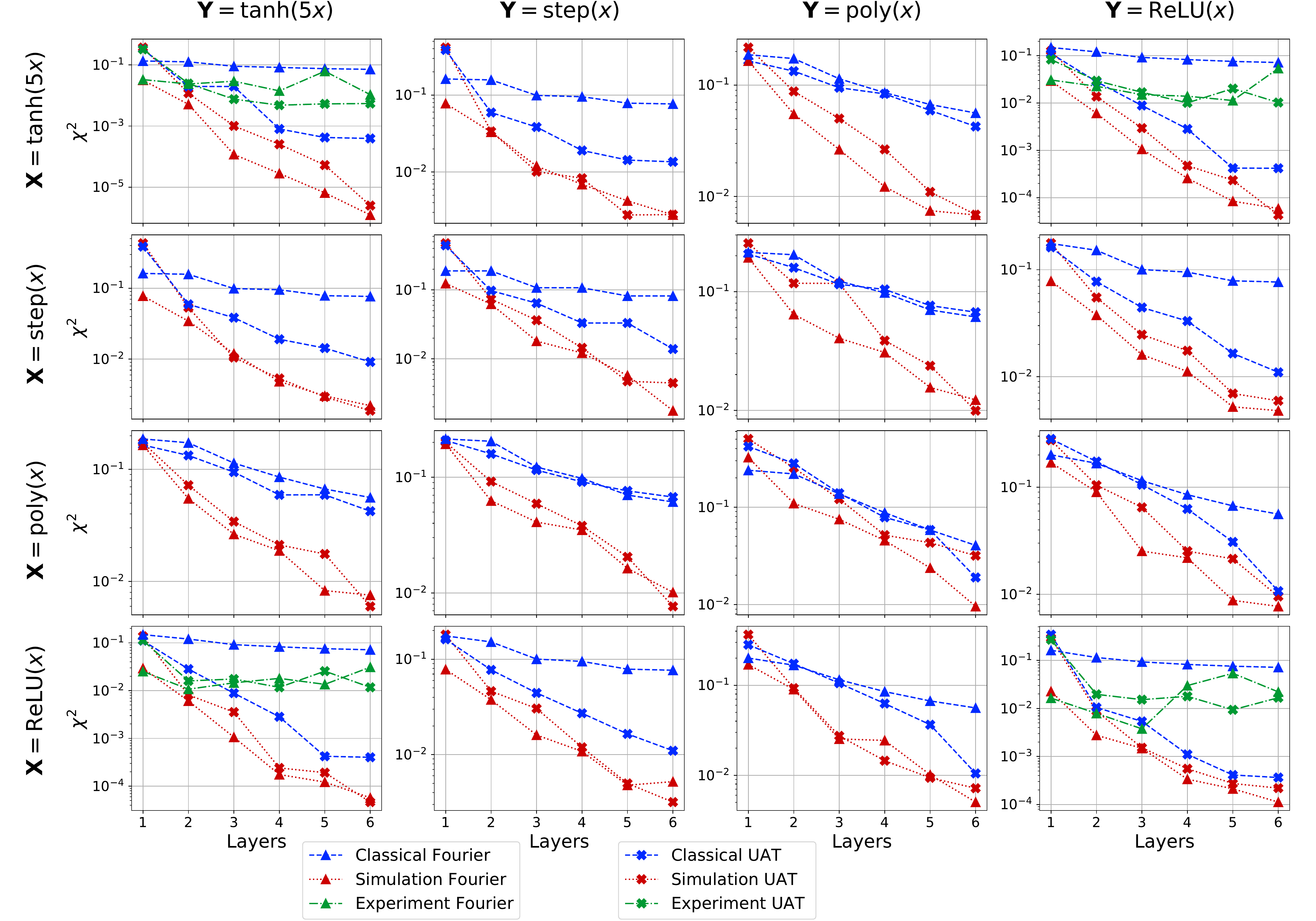}
    \caption{Values of $\chi^2$ for the X-Y benchmark in all possible combinations for real and imaginary parts of the four test functions from Eqs.~\eqref{eq:functions} to~\eqref{eq:functions3} using classical computation (blue scatter), classical simulation of the quantum algorithm (red scatter), and experimental implementation with a superconducting qubit (green scatter). Fourier models are depicted with triangles, while UAT models are represented by crosses.}
    \label{fig:chi2_complex}
\end{figure*}

The values of $\chi^2$ in Fig.~\ref{fig:2d_chi} measure the accuracy of the approximations. As before, we see that a larger number of layers provides better approximations to the target function. In agreement to the one-dimensional $Z$ benchmark, the scaling is similar for both quantum and classical methods.

A complex function in the $X-Y$ benchmark is depicted in Fig.~\ref{fig:complex_funs}. In that case, the $X$ measurement leads to $\tanh(5x)$ while the imaginary part contains $\relu(x)$. All the observations made for the $Z$ benchmark hold in this case. 

Fig.~\ref{fig:chi2_complex} shows values of $\chi^2$ for all possible combinations of real and imaginary parts using the functions described in Eqs.\eqref{eq:functions}--\eqref{eq:functions3}, being the real and imaginary parts. In this case, it is possible to see a common advantage for the quantum models. In particular, the functions $\tanh(5x)$ and $\relu(x)$ work better in any combination. This reflects the behaviour already observed in Fig.~\ref{fig:chi2_real}, where these functions present better performance than the other functions considered.

\section{Conclusions}\label{sec:conclusions}

We have shown that a single-qubit circuit has enough flexibility to 
encode any complex function $z(x)$ in the degrees of freedom of each quantum gate. 
This universal representation is achieved by acting with
a quantum circuit on a single-qubit 
gate that depends on input variables as well as additional parameters that are fixed by Machine Learning techniques.

This result guarantees that a single-qubit circuit, as defined in this work, is able to store two different and independent real functions. These functions are not restricted to be single-variable, as there exists no limitation to the dimensionality of its independent variable. Our present results provide the highest degree of compression of data in a single-qubit state, since there are no more degrees of freedom available in a qubit. 

The proof for universality was shown following two
different approaches, leading to two sets of single-qubit gates. 
In the first method, we found a link between quantum circuits and Fourier series. 
We have defined a quantum gate tuned by five parameters such that the 
row of $N$ gates applied to an initial state provides a final state 
where an $N$-term Fourier series is encoded. 
For the second method, a single-qubit quantum gate
is applied multiple 
times
to yield a final state whose form is 
compatible with the Universal Approximation Theorem. The input state does not compromise the validity of the approximation theorems but it affects the parameters defining the circuit. 

We also provide numerical evidence on the flexibility and approximation capabilities
of these quantum circuits. The benchmarks have been obtained using simulations and
classical minimizers to find optimal parameters for a set of test
functions. We have included as benchmarks 1D and 2D real functions and 1D complex
functions. The final results have also been compared to its classical
counterparts. In all cases, it is possible to see an equivalent scaling
for both classical and quantum methods. This ensures numerically that
the quantum procedure is comparable to the standard classical ones.

Experimental results implemented using a superconducting trasmon qubit confirm the same
trend obtained with the classical simulations. The finite qubit coherence
does not seem to impact the results significantly in the gate sets applied.

It is known that no single-qubit algorithm can bring quantum advantage since it is efficiently simulable by classical computers. The approach here presented can instead be included as a sub-task of a larger routine. Achieving a quantum advantage requires extending the approach to multi-qubit circuits. For instance, one can implement the gate set defined in this work to apply on several different qubits, and then add entangling gates, see \cite{reuploading-perezsalinas2020, proton-perezsalinas2020}. Whether this extension to multi-qubit circuits is more flexible than the approach of this work remains unclear. Thus, the present work can serve as a starting point for studying the representation
capability of quantum systems beyond one qubit
(see also \cite{encoding-schuld2020}).

\section*{Acknowledgements}
We thank Martin Weides and Marco Pfirrman at Glasgow University for fabricating the superconducting transmon qubit device used in this work at the Karlsruhe Institute of Technology (KIT), and Prof. Sergio O. Valenzuela from the Catalan Institute of Nanoscience and Nanotechnology (ICN2) for granting access to the dilution refrigerator during the initial stages of the measurements. We also thank the rest of the IFAE QCT group and the Qilimanjaro team for their contribution setting up the new laboratory space where part of the experiment was conducted. We acknowledge financial support from Secretaria d'Universitats i Recerca del Departament d'Empresa i Coneixement de la Generalitat de Catalunya, co-funded by the European Union Regional Development Fund within the ERDF Operational Program of Catalunya (project QuantumCat, ref. 001-P-001644). A.G-S received funding from the European Union’s Horizon 2020 research and innovation programme under grant agreement No 951911 (AI4Media). P. F.-D. acknowledges support from "la Caixa" Foundation - Junior leader fellowship (ID100010434-LCF/BQ/PR19/11700009), Ministry of Economy and Competitiveness and Agencia Estatal de Investigación (FIS2017-89860-P; SEV-2016-0588; PCI2019-111838-2), and European Commission (FET-Open AVaQus GA 899561; QuantERA). IFAE is partially funded by the CERCA program of the Generalitat de Catalunya.

\bibliography{citations.bib}

\newpage
\clearpage
\newpage
\appendix

\section{Proof of universality theorems}
\noindent We prove here the results claimed in Theorems \ref{th:q_fourier} and \ref{th:q_UAT}.

\subsection{Demonstration for the quantum Fourier series}\label{app:fourier}

The quantum circuit proposed in Theorem \ref{th:q_fourier} fulfills the requirement that every new gate plays the role of a new step in the original Fourier series. The proof is based on an inductive procedure  and can be then decomposed in two steps. First, we show that the first gate of the circuit is equivalent to the $0$-th constant Fourier term. Then, we show that if there are $N$ gates in a row forming a $N$-term Fourier series, then adding a new gate provides a $(N+1)$-terms Fourier series.

\bigskip

Let the fundamental gate $U^\mathcal{F}(x, \vec\theta)$ gate defined in Eq. \eqref{eq:unitary_f} be
\begin{widetext}
\begin{multline}
 U^\F(x; \vec\theta) = U^\mathcal{F}(x; \omega, \alpha, \beta, \varphi, \lambda) = R_z\left(\alpha + \beta\right) R_y(2\lambda)R_z\left(\alpha - \beta\right) R_z(2\omega x) R_y(2\varphi) = \\ = 
    \begin{pmatrix}
    \cos\lambda \cos\varphi e^{i \alpha} e^{i\omega x} - \sin\lambda \sin\varphi e^{i \beta} e^{- i \omega x} & 
    -\cos\lambda \sin\varphi e^{i \alpha} e^{i\omega x} - \sin\lambda \cos\varphi e^{i \beta} e^{- i \omega x} \\
    \sin\lambda \cos\varphi e^{-i \beta} e^{i\omega x} + \cos\lambda \sin\varphi e^{-i \alpha} e^{- i \omega x} & 
    -\sin\lambda \sin\varphi e^{-i \beta} e^{i\omega x} + \cos\lambda \cos\varphi e^{-i \alpha} e^{- i \omega x} \\
    \end{pmatrix},
\end{multline}
\end{widetext}

It is possible to recast the above choice of fundamental gate using the following redefinition of parameters,
\begin{eqnarray}
    a_+  =   \cos\lambda \cos\varphi e^{i \alpha}, \\
    a_-   =   -\sin\lambda \sin\varphi e^{i \beta}, \\
    b_+  =   -\cos\lambda \sin\varphi e^{i \alpha}, \\
    b_-  =  - \sin\lambda \cos\varphi e^{i \beta}.
\end{eqnarray}

A more compact representation of the fundamental gate follows

\bigskip
\begin{lemma}

The fundamental gate can be expressed as
\begin{multline}\label{eq:u_f_2}
\small
    U^{\mathcal{F}}(x; \omega, \alpha, \beta, \varphi, \lambda) = \\
    \begin{pmatrix}
    a_+ e^{i\omega x} + a_- e^{- i \omega x} & 
    b_+ e^{i\omega x} + b_- e^{- i \omega x} \\
    -b_-^{*} e^{i\omega x} - b_+^{*} e^{- i \omega x} & 
    a_-^{*} e^{i\omega x} + a_+^{*} e^{- i \omega x} \\
    \end{pmatrix},
\end{multline}

as can be verified by simple substitution from Definition \ref{def:fourier_gate}. 

\end{lemma}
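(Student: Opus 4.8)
The plan is to prove the identity by direct evaluation, since the statement is an equality between two explicit $2\times 2$ matrices and carries no analytic content beyond bookkeeping. I would take as given the explicit matrix form of $U^{\mathcal{F}}(x;\vec\theta)$ already displayed above, obtained by writing each of the five rotations in Definition~\ref{def:fourier_gate} as a $2\times 2$ matrix and multiplying them from right to left. The only $x$-dependent factor is the diagonal rotation $R_z(2\omega x)$, which supplies the two monomials $e^{\pm i\omega x}$; the remaining $z$- and $y$-rotations contribute only the $x$-independent phases $e^{\pm i\alpha},e^{\pm i\beta}$ (after collecting the half-angle combinations $e^{\pm i(\alpha\pm\beta)/2}$) together with the real trigonometric factors $\cos\lambda,\sin\lambda,\cos\varphi,\sin\varphi$. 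Consequently every entry of the product is automatically a linear combination of $e^{+i\omega x}$ and $e^{-i\omega x}$, which is precisely the structure asserted in Eq.~\eqref{eq:u_f_2}.

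Given that matrix, the compact form follows by inspection. I would read off, in each of the two upper entries, the coefficient of each monomial and match it against the definitions of $a_\pm,b_\pm$: the coefficient of $e^{+i\omega x}$ in the $(1,1)$ entry is $\cos\lambda\cos\varphi\,e^{i\alpha}=a_+$, the coefficient of $e^{-i\omega x}$ is $-\sin\lambda\sin\varphi\,e^{i\beta}=a_-$, and analogously $b_+,b_-$ for the $(1,2)$ entry. This disposes of the top row immediately and with no sign subtlety.

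The only place where care is needed — and hence the step most prone to an error — is the lower row, where the claim is that the two entries are the conjugate combinations $-b_-^{*}e^{i\omega x}-b_+^{*}e^{-i\omega x}$ and $a_-^{*}e^{i\omega x}+a_+^{*}e^{-i\omega x}$. I would verify this by conjugating the definitions, e.g.\ $-b_-^{*}=\sin\lambda\cos\varphi\,e^{-i\beta}$ and $-b_+^{*}=\cos\lambda\sin\varphi\,e^{-i\alpha}$, and checking these against the explicitly computed $(2,1)$ entry, and likewise $a_-^{*},a_+^{*}$ against the $(2,2)$ entry. Writing $A=a_+e^{i\omega x}+a_-e^{-i\omega x}$ and $B=b_+e^{i\omega x}+b_-e^{-i\omega x}$, the lower row is exactly $(-B^{*},A^{*})$, so the matrix has the canonical form $\left(\begin{smallmatrix}A & B\\ -B^{*} & A^{*}\end{smallmatrix}\right)$; that the lower row is fixed by the upper one in this conjugate-with-sign pattern is simply the statement that $U^{\mathcal{F}}$ is $SU(2)$-valued, which holds because each elementary rotation has unit determinant. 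No genuine obstacle arises: the argument is a finite, routine verification, and the substitution is well-posed precisely because the five-parameter gate was built so that $a_\pm,b_\pm$ furnish independent complex coefficients constrained only by unitarity.
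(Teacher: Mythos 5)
Your proposal is correct and follows essentially the same route as the paper, which simply states the explicit five-rotation matrix product and observes that the compact form follows ``by simple substitution'' of the definitions of $a_\pm$ and $b_\pm$; your coefficient matching, including the conjugated lower row, checks out against the paper's explicit matrix. The only addition you make is the remark that the lower row is forced by the $SU(2)$ structure $\bigl(\begin{smallmatrix}A & B\\ -B^{*} & A^{*}\end{smallmatrix}\bigr)$, which the paper notes only afterwards as a unitarity comment rather than as part of the verification.
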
 
Note that this expression corresponds to a unitary matrix, due to the relations involved in the definition of the coefficients $a_\pm$ and $b_\pm$.
Note also that a unitary matrix has three degrees of freedom, which are here fixed by 5 parameters. An intuition behind the role of these parameters is that $\alpha, \beta, \varphi, \lambda$ are related to the coefficients of one Fourier step, that is $a_\pm, b_\pm$, while $\omega$ can be identified with the corresponding frequency.

A total circuit can be constructed by multiplying $k$ fundamental gates to obtain $\mathcal{U}^{(k), s}_{f, \phi}$ as in Definition \ref{def:prod_general}. Starting with this composite gate, we can now proof the main Fourier approximation theorem.
\bigskip

\begin{theorem}\label{th:fourier_2}
There exists a series of $k$ single-qubit gates forming a $k$-th approximant circuit that delivers a unitary operation where all its coefficients are written as Fourier series.
\end{theorem}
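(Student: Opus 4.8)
The plan is to proceed by induction on the number $k$ of fundamental gates, leaning on two ingredients: the explicit single-gate form of Eq.~\eqref{eq:u_f_2}, in which every matrix element is a trigonometric polynomial carrying only the two frequencies $\pm\omega$, and the elementary fact that finite Fourier series are closed under both addition and multiplication (frequencies add, and the sum stays finite). Throughout I would keep track of the $SU(2)$ structure displayed in Eq.~\eqref{eq:u_f_2}, in which the lower row is fixed by the upper one through complex conjugation (valid for real $x$); this structure is preserved under matrix products and is precisely what guarantees that the assembled operator $\prod_{i=1}^{k} U^{\mathcal F}(x,\vec\theta_i)$ remains unitary, so that ``delivers a unitary operation'' is automatic and only the Fourier-series nature of the entries needs proof.

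For the base case $k=1$ I would read off from Eq.~\eqref{eq:u_f_2} that each entry is of the form $\gamma_+ e^{i\omega_1 x} + \gamma_- e^{-i\omega_1 x}$, already a two-sided Fourier series. Choosing $\omega_1 = 0$ collapses the relevant amplitude $\bra{1}U^{\mathcal F}\ket{0}$ to the constant $-(b_-^{*}+b_+^{*})$, which I would identify with the $0$-th coefficient $c_0$ of Theorem~\ref{th:fourier}; this anchors the correspondence between the circuit and the series and supplies the first induction rung.

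For the inductive step I would write $\prod_{i=1}^{k+1} U^{\mathcal F} = \left(\prod_{i=1}^{k} U^{\mathcal F}\right)U^{\mathcal F}(x,\vec\theta_{k+1})$ and expand the $2\times 2$ product. By the inductive hypothesis every entry of the left factor is a finite Fourier series, while every entry of the new gate is a trigonometric polynomial with frequencies $\pm\omega_{k+1}$; each entry of the product is then a finite sum of products of such objects and hence again a finite Fourier series, with the top-of-the-range harmonic shifted by $\pm\omega_{k+1}$. Choosing $\omega_{k+1}$ appropriately relative to the base frequency $2\pi/P$ extends the harmonic content and raises the number of Fourier terms by one, turning an $N$-term series into an $(N+1)$-term series, with unitarity inherited from the product of unitaries.

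I expect the genuinely delicate point to lie not in this closure argument but in matching a \emph{prescribed} target $z_N(x)$ of Eq.~\eqref{eq:fourier_exp_complex}, which is the substance of Theorem~\ref{th:q_fourier}. Two issues must be controlled. First, the frequency bookkeeping: taking all $\omega_i$ equal produces harmonics of a single parity only (e.g. two gates at frequency $\Omega$ yield $\{-2,0,2\}\Omega$), so the frequencies of successive gates must be arranged so that their sums fill out the consecutive set $\{-N,\dots,N\}$. Second, the reduction of the four complex amplitudes $a_\pm,b_\pm$ of each gate -- equivalently the angles $\alpha,\beta,\varphi,\lambda$ of Definition~\ref{def:fourier_gate} -- to the prescribed coefficients $c_n$, subject to the unitarity relations linking $a_\pm$ and $b_\pm$. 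Carrying out this coefficient matching through the explicit parameter identities is where the real effort sits, and I would relegate those computations to the detailed relations collected in the Appendix.
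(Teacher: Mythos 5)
Your proposal is correct and follows essentially the same route as the paper's proof: induction on the number of gates, with the $\omega=0$ gate supplying the constant term as the base case, and the inductive step obtained by expanding the $2\times 2$ matrix product so that each new gate shifts the existing frequencies by $\pm\omega$ and extends the series by one harmonic, with explicit recursion relations for the coefficients deferred to the appendix. Your added remark on the parity of the harmonics when all $\omega_i$ coincide is a point the paper glosses over with its choice $\Omega_n=(2n+1)\pi/2$, but it does not change the structure of the argument.
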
 

\begin{proof}
The proof of this constructive theorem consists in making contact with harmonic analysis and proceeds by induction.

{\bf i)} The first circuit consists only of one fundamental gate, chosen with frequency $\omega=0$, that is
\begin{equation}
    U_0^\mathcal{F} = 
    \begin{pmatrix}
    A_0 &  B_0 \\
    -B_0^{*} & A_0^{*} 
    \end{pmatrix},
\end{equation}
This, indeed corresponds to the first constant term of Fourier series.

{\bf ii)} We now assume that the $N$-th approximant circuit takes the form
\begin{equation}
\small
    \prod_{i=0}^N U_i^\mathcal{F} = \begin{pmatrix}
    \sum_{n = -N}^N A_n e^{i \Omega_n x} & \sum_{n = -N}^N B_n e^{i \Omega_n  x} \\
    -\sum_{n = -N}^N B_n^ {*} e^{-i \Omega_n  x} & \sum_{n = -N}^N A_n^ {*} e^{-i \Omega_n  x}
    \end{pmatrix}.
\end{equation}
where the frequencies are  $(\Omega_n \pm \omega)$. The  result of adding a new fundamental gate corresponds to 
\begin{equation}\label{eq:uN+1}
\small
\prod_{i=0}^{N+1} U_i^\mathcal{F} =
\begin{pmatrix}
    \sum_{n = -N-1}^{N + 1} \tilde{A}_n e^{i \tilde \Omega_n x} & \sum_{n = -N-1}^{N + 1} \tilde{B}_n e^{i \tilde \Omega_n x} \\
    -\sum_{n = -N-1}^{N + 1} \tilde{B}_n^{*} e^{-i \tilde \Omega_n x} & \sum_{n = -N-1}^{N + 1} \tilde{A}_n^{*} e^{-i \tilde \Omega_n x}
    \end{pmatrix}
\end{equation}{}
where we need to fix the new coefficients $\tilde\Omega_n$ and frequencies in terms of the old ones $\Omega_n$ and the new single gate frequency $\omega$ added to the circuit. It is easy to see that the addition of a gate changes the frequency in one unit, that is, $\tilde\Omega=\Omega_n \pm \omega$. Then, the general structure of the series can be adapted to  a Fourier expansion by choosing
\begin{equation}
    \Omega_n=(2 n+1)\frac{\pi}{2}.
\end{equation}

After fixing the values that the frequencies must take, it is straightforward to re-arrange terms in the matrix and reach
\begin{eqnarray}\label{eq:induction}
\tilde{A}_0 & = & A_0 a_- - B_0^{*} b_- \\
\tilde{A}_{\pm n} & = & A_{\pm n} a_- - B_{\mp n}^{*} b_- \nonumber \\
 & + & A_{\pm (n-1)} a_+ - B^{*}_{\mp (n-1)} b_+ \\
\tilde{A}_{\pm (N + 1)} & = & A_{\pm N} a_+ - B^{*}_{\mp N} b_+\\
\tilde{B}_0 & = & B_0 a_- + A^{*}_0 b_- \\ 
\tilde{B}_{\pm n} & = & B_{\pm n} a_- + A^{*}_{\mp n} b_- \nonumber \\
 & + & B_{\pm (n-1)} a_+ + A^{*}_{\mp (n-1)} b_+ \\
\tilde{B}_{\pm (N + 1)} & = & A^{*}_{\mp N} a_+ + A^{*}_{\mp N} b_+
\end{eqnarray}
This provides the explicit connection between approximant circuits and Fourier expansions for the coefficients of the global unitary matrix.

\end{proof}

The above constructive theorem is sufficient to prove that the output probability of a series of approximant circuits
can reproduce any functionality.

\bigskip

\subsection{Demonstration for the quantum UAT}\label{app:uat}

An alternative manner to design a single-qubit universal approximant is related to the equivalent Universal Approximation Theorem broadly used in Neural Networks  \cite{uat-cybenko1989}. The idea is to start from a different fundamental gate.

\bigskip

Let the {\sl fundamental gate} $U^{\rm UAT}(\vec x; \vec\theta)$ defined in Eq. \eqref{eq:unitary_uat} be explicitly
\begin{widetext}{}
\begin{equation}\label{eq:u_uat}
    U^{\rm UAT}(x; \vec \omega, \alpha, \varphi) = R_z(2\left(\vec \omega \cdot \vec x + \alpha\right)) R_y(2\varphi) = 
    \begin{pmatrix}
        \cos(\varphi) e^{i(\vec \omega \cdot \vec x + \alpha)} & -\sin(\varphi) e^{i(\vec \omega \cdot \vec x + \alpha)} \\
        \sin(\varphi) e^{-i(\vec \omega \cdot \vec x + \alpha)} & \cos(\varphi) e^{-i(\vec \omega \cdot \vec x + \alpha)}
        \end{pmatrix},
\end{equation}
\end{widetext}

A total circuit can be constructed by multiplying $k$ fundamental gates to obtain $\mathcal{U}^{(k), {\rm UAT}}_{f, \phi}$ as in Definition \ref{def:prod_general}. We can now prove the quantum UAT using this fundamental gate. 

\begin{theorem}\label{th:uat_2}
There exists a series of $k$ single-qubit gates forming a $k$-th approximant circuit that delivers a unitary operation where all its coefficients are written as an approximation as defined by Theorem \ref{th:UAT}, UAT.
\end{theorem}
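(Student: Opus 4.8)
The plan is to mirror the structure of Theorem~\ref{th:fourier_2}: expand the amplitude $\bra 1 \prod_{i=1}^N U^{\rm UAT}(\vec x,\vec\theta_i)\ket 0$ explicitly and show it reproduces, term by term, the UAT sum of Eq.~\eqref{eq:UAT} with the choice $\sigma(\cdot)\to e^{i(\cdot)}$ justified in Appendix~\ref{app:real_to_complex_uat}. Writing $\theta_i=\vec\omega_i\cdot\vec x+\alpha_i$, I would first organize the product in Eq.~\eqref{eq:u_uat} as a sum over basis-state paths $0=s_0,s_1,\ldots,s_N=1$, where at step $i$ a gate either preserves the state ($s_i=s_{i-1}$, a diagonal factor $\cos\varphi_i\,e^{\pm i\theta_i}$) or flips it ($s_i\neq s_{i-1}$, an off-diagonal factor $\pm\sin\varphi_i\,e^{\mp i\theta_i}$). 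Since $s_0\neq s_N$, only paths with an odd number of flips survive.

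The key step is to isolate the single-flip contribution. A flip occurring at position $n$ with all other gates diagonal contributes
\begin{equation}
  \Big(\prod_{j<n}\cos\varphi_j\,e^{i\theta_j}\Big)\,\sin\varphi_n\,e^{-i\theta_n}\,\Big(\prod_{j>n}\cos\varphi_j\,e^{-i\theta_j}\Big),
\end{equation}
so summing over $n$ gives $\sum_{n=1}^N C_n\,e^{i(\vec W_n\cdot\vec x+B_n)}$ with $C_n=\sin\varphi_n\prod_{j\neq n}\cos\varphi_j$, $\vec W_n=\sum_{j<n}\vec\omega_j-\sum_{j\geq n}\vec\omega_j$, and $B_n$ the analogous signed sum of the $\alpha_j$. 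I would then verify that the linear map $\{\vec\omega_j,\alpha_j\}\mapsto\{\vec W_n,B_n\}$ is surjective: the consecutive differences obey $\vec W_{n+1}-\vec W_n=2\vec\omega_n$, which reconstructs each $\vec\omega_n$ from an arbitrary target set of effective weights, so this term has exactly the UAT form. Invoking the complex UAT then supplies an integer $N$ and coefficients for which the single-flip sum lies within $\epsilon/2$ of $z(\vec x)=f(\vec x)e^{i\phi(\vec x)}$ uniformly on $I_m$.

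The hard part will be controlling the remaining odd-flip contributions (three flips, five flips, and so on), which carry no counterpart in the classical UAT and break the clean closure that made the Fourier induction exact: there every gate carried both $e^{\pm i\omega x}$ and the product stayed a Fourier series, whereas here each entry carries a single phase, so multi-flip paths generate genuinely new effective frequencies (signed sums of three or more $\vec\omega_j$) that cannot be absorbed into the target sum. My plan is to take the rotation angles $\varphi_j$ small, so that a path with $2m+1$ flips is suppressed by $\prod\sin\varphi\sim\varphi^{2m+1}$ and the leading single-flip sum dominates with corrections of order $\varphi^3$. Because shrinking $\varphi_j$ also shrinks the realizable coefficients $C_n$, I would compensate by splitting each required UAT coefficient across several gates sharing the same effective frequency, trading a larger $N$ for smaller angles, and then bound the aggregate multi-flip error uniformly below $\epsilon/2$. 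Combining the two error contributions through the triangle inequality yields the claimed $\epsilon$ approximation, completing the analogue of Theorem~\ref{th:fourier_2} for the UAT gate.
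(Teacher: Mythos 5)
Your route is genuinely different from the paper's, and it is the second half of your plan that does not close. The paper does not attempt to match the classical UAT sum term by term: it writes the full amplitude as $B_N=\sum_m c_m e^{i\delta_m}e^{i\vec w_m\cdot\vec x}$ (Eq.~\eqref{eq:B_N}), keeping \emph{all} $2^{N-1}$ path contributions, and then runs Cybenko's density argument directly on the set $S$ of achievable amplitudes: if $\bar S\neq C^{\mathbb C}(I_m)$, Hahn--Banach (Theorem~\ref{th:hahn_banach}) and Riesz representation (Theorem~\ref{th:riesz}) produce a non-null measure $\mu$ annihilating $S$; since $S$ already contains single complex exponentials of arbitrary frequency, $\hat\mu\equiv 0$, so $\mu=0$, a contradiction. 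Nothing in that argument requires the multi-flip terms to be small --- they are simply members of the class being shown dense. Your approach instead tries to realize the target sum on the single-flip subfamily and treat everything else as a perturbation, which buys constructiveness but creates the error term the paper never has to face.

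The gap is that the small-angle suppression of multi-flip paths does not actually work at fixed approximation target. Your single-flip coefficients satisfy $|C_n|\le\sin\varphi_n$, so to reproduce a UAT sum whose coefficients have total mass $M=\sum_n|\alpha_n|$ you must keep $\sum_j\sin\varphi_j\gtrsim M$. The aggregate magnitude of all paths with $2m+1$ flips is the elementary symmetric polynomial of degree $2m+1$ in the $\sin\varphi_j$, which for $N$ roughly equal angles is $\approx\binom{N}{2m+1}\varphi^{2m+1}\approx (N\varphi)^{2m+1}/(2m+1)!$. Splitting a coefficient across $K$ gates divides each three-flip product by $K^3$ but multiplies the number of such paths by $\sim K^3$, so the three-flip contribution remains $\Theta(M^3)$ (and the full tail is of order $\sinh(M)-M$) no matter how small the individual angles become. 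Absent a demonstrated cancellation mechanism among the phases of these paths --- and the effective frequencies they carry are generically new, so none is apparent --- the claimed $\epsilon/2$ bound on the multi-flip error fails for any target with $M$ of order one, and the final triangle inequality cannot be assembled. To repair the argument you would either have to exhibit such cancellations explicitly, or abandon the perturbative split and argue density of the full amplitude as the paper does.
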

\begin{proof}
Let us take the $U^{\rm UAT}$ defined in Eq. \eqref{eq:u_uat}.
\begin{multline*}
    R_z(2 \vec \omega \cdot \vec x + 2\alpha) R_y(2\varphi) = \\ \begin{pmatrix}
\cos(\varphi) e^{i(\vec \omega \cdot \vec x + \alpha)} & -\sin(\varphi) e^{i(\vec \omega \cdot \vec x + \alpha)} \\
\sin(\varphi) e^{-i(\vec \omega \cdot \vec x + \alpha)} & \cos(\varphi) e^{-i(\vec \omega \cdot \vec x + \alpha)}
\end{pmatrix}
\end{multline*}

By direct inspection it is straigthforward to check that every entry in this matrix can be understood as one term of $\bar f_N$ in Eq. \eqref{eq:UAT}. From this definition we can infer the recursive rule that defines all steps. If 
\begin{eqnarray}
A_{N} = \bra 0 \prod_{n=1}^{N}U_n^{UAT} \ket 0  \\
B_{N} = \bra 1 \prod_{n=1}^{N}U_n^{UAT} \ket 0
\end{eqnarray}

then the updating rule is 

\begin{eqnarray}
\begin{split}
A_{N+1} = 
A_N  \cos(\varphi_{N+1}) e^{i \vec \omega_{N+1} \cdot \vec x} e^{i \alpha_{N+1}} - \\
B_N  \sin(\varphi_{N+1}) e^{i \vec \omega_{N+1} \cdot \vec x} e^{i \alpha_{N+1}}    
\end{split}
 \\
 \begin{split}
B_{N+1}=
A_N \sin(\varphi_{N+1}) e^{-i \vec \omega_{N+1} \cdot \vec x}e^{\alpha_{N+1}} + \\
B_N \cos(\varphi_{N+1}) e^{-i \vec \omega_{N+1} \cdot \vec x}e^{i\alpha_{N+1}}     
 \end{split}
\end{eqnarray}

Having this updating rule in mind, it is possible to write 
\begin{equation}\label{eq:B_N}
B_N = \sum_{m = 0}^{2^{N - 1}} c_m(\varphi_1, \ldots, \varphi_N) e^{i \delta_m(\alpha_1, \ldots, \alpha_N)} e^{i \vec w_m(\vec\omega_1, \ldots, \vec\omega_N) \cdot \vec x}, 
\end{equation}
where the inner dependencies of $c_m$ are products of sines and cosines of $\varphi_n$, and those of $\delta_m$ and $\vec w_m$ are linear combinations of $\alpha_n$ and $\omega_n$.

Let us proceed now as in the proof of the UAT in Ref. \cite{uat-cybenko1989}. Let us take $S$ as the set of functions of the form $B_N(\vec x)$, and $C^{\mathbb{C}}(I_m)$ the set of continuous complex-valued functions in $I_m$, defined as in Theorem \ref{th:UAT}. We assume that $S \subset C^{\mathbb{C}}(I_m)$, and $S \neq C^{\mathbb{C}}(I_n)$. We can now apply the Theorem \ref{th:hahn_banach}, known as Hahn-Banach theorem. This theorem allows to state that there exists a linear functional $L$ acting on $C^{\mathbb{C}}(I_n)$ such that
\begin{equation}
L(S) = L(\bar{S}) = 0, \qquad L\neq 0.
\end{equation}
Notice that this theorem is applicable as there are no restriction in working only with real numbers.

We call now Theorem \ref{th:riesz}, known as Riesz representation theorem. We can write the functional $L$ as 
\begin{equation}
L(h) = \int_{I_n} h(x) d\mu(x)
\end{equation}
for $\mu \in M(I_n)$ non-null and $\forall \, h \in  C^{\mathbb{C}}(I_n)$. In particular, 
\begin{equation}
L(h) = A_N(\vec x) d\mu(\vec x) = 0,
\end{equation}
and thus
\begin{equation}
\int_{I_n} e^{i\vec{v_m}(\omega_1, \ldots, \omega_N) \cdot \vec x} d\mu(\vec x) = 0.
\end{equation}
This is the usual Fourier transform of $\mu$. We can conclude by calling Theorem \ref{th:lebesgue}, Lebesgue Bounded Convergence theorem, that if the $\mathcal{FT}(\mu) = 0$, then $\mu = 0$, and we come into a contradiction with the only assumption we made. 

The measure of all half-planes being 0 implies that $\mu = 0$. Let us fix $\vec w$, and for a bounded measurabe function $h$ we define the linear functional
\begin{equation}
F(h) = \int_{I_n} h(\vec w \cdot \vec x) d\mu(x),
\end{equation}
which is bounded on $L^\infty(\mathbb{R})$ since $\mu$ is a finite signed measure. Let $h$ be an indicator of the half planes $h(u) = 1$ if $u\geq -b$ and $h(u) = 0$ otherwise, then
\begin{equation}
F(h) = \int_{I_n} h(\vec w \cdot \vec x) d\mu(x) =  \mu(\Pi_{\vec w, b}) + \mu(H_{\vec w, b}) = 0.
\end{equation}
By linearity, $F(h) = 0$ for any simple function, such as sum of indicator functions of intervales \cite{analysis-ash1972}. 

In particular, for the bounded measurable functions $s(u) = \sin(\vec w \cdot \vec x), c(u) = \cos(\vec w \cdot \vec x)$ we can write
\begin{equation}
F(c + is) = \int_{I_n} \exp{i \vec w \cdot \vec x} d\mu(\vec x) = 0.
\end{equation}
The Fourier Transform of this $F$ is null, thus $\mu = 0$.

\end{proof}

For the sake of completeness, we cover now the three theorems required for the proof.

\begin{theorem}\label{th:hahn_banach}
{\bf: Hahn-Banach} \cite{analysis-hahn1927, analysis-banach1929}\\

Set $\mathbb{K} = \mathbb{R} {\;\rm or\;} \mathbb{C}$. Let $V$ be a $\mathbb{K}-$vector space with a seminorm $p: V \rightarrow \mathbb{R}$. If $\varphi : U \rightarrow \mathbb{K}$ is a $\mathbb{K}-$linear functional on a $\mathbb{K}-$linear subspace $U\subset V$ such that
\begin{equation}
|\varphi(x)| \leq p(x) \qquad \forall x \in U,
\end{equation}
then there exists a linear extension $\psi : V \rightarrow \mathbb{K}$ of $\varphi$ to the whole space $V$ such that
\begin{eqnarray}
\psi(x) = \varphi(x) \qquad \forall x\in U \\
|\psi(x)| \leq p(x) \qquad \forall x\in V
\end{eqnarray}
\end{theorem}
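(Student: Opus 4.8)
The plan is to prove the statement first in the real scalar case $\mathbb{K} = \mathbb{R}$ and then to deduce the complex case by the standard real-to-complex reduction. In the real case it is convenient to track only the one-sided bound $\psi(x) \le p(x)$ throughout the construction, recovering the two-sided bound $|\psi(x)| \le p(x)$ at the very end from the symmetry $p(-x) = p(x)$ of the seminorm. The real argument rests on two ingredients: a one-dimensional extension step and a transfinite exhaustion via Zorn's lemma.

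First I would establish the \emph{one-step extension}. Given the dominated functional $\varphi$ on a proper subspace $U \subsetneq V$, choose any $x_0 \in V \setminus U$ and seek a value $c \in \mathbb{R}$ so that $\psi(u + t x_0) = \varphi(u) + t c$ defines a dominated extension to $U \oplus \mathbb{R} x_0$. Dividing the required inequality $\varphi(u) + t c \le p(u + t x_0)$ by $|t|$ and separating the signs of $t$, this reduces to finding $c$ with $\varphi(y_1) - p(y_1 - x_0) \le c \le p(y_2 + x_0) - \varphi(y_2)$ for all $y_1, y_2 \in U$. Such a $c$ exists because, for every $y_1, y_2 \in U$, subadditivity gives $\varphi(y_1 + y_2) \le p(y_1 + y_2) \le p(y_1 - x_0) + p(y_2 + x_0)$, so the supremum of the left-hand quantities does not exceed the infimum of the right-hand ones; any $c$ in this nonempty gap works. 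This is the step I expect to be the main obstacle, since it is the only place where the domination hypothesis is genuinely exploited; the remaining steps are formal once it is in hand.

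Next I would apply Zorn's lemma to the set of all dominated extensions $(W, \psi_W)$ with $U \subseteq W \subseteq V$, ordered by extension. Any chain is bounded above by the extension defined on the union of the domains, so a maximal element $(W_\ast, \psi)$ exists. Were $W_\ast \ne V$, the one-step extension would enlarge it, contradicting maximality; hence $W_\ast = V$. Finally $-\psi(x) = \psi(-x) \le p(-x) = p(x)$ upgrades the one-sided bound to $|\psi(x)| \le p(x)$ on all of $V$, completing the real case.

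For the complex case I would use the Bohnenblust--Sobczyk device. Set $u(x) = \operatorname{Re} \varphi(x)$, a real-linear functional on $U$ satisfying $|u| \le |\varphi| \le p$, and apply the real theorem (viewing $V$ as a real vector space) to obtain a real-linear extension $U \colon V \to \mathbb{R}$ with $|U| \le p$. Define $\psi(x) = U(x) - i\, U(i x)$; one checks directly that $\psi$ is complex-linear and extends $\varphi$. For domination, write $\psi(x) = |\psi(x)| e^{i \theta}$, so that $|\psi(x)| = \psi(e^{-i\theta} x) = U(e^{-i\theta} x) \le p(e^{-i\theta} x) = p(x)$, where the middle equality holds because $\psi(e^{-i\theta} x)$ is real and the last uses the homogeneity $p(\lambda x) = |\lambda| p(x)$. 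This yields the desired complex extension and finishes the proof.
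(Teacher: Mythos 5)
Your proof is correct. The paper itself does not prove this statement: it imports the Hahn--Banach theorem as a classical result, stating it with citations to Hahn (1927) and Banach (1929) purely so that it can be invoked in the proof of the quantum UAT (Theorem~\ref{th:uat_2}), so there is no in-paper argument to compare against. What you supply is the standard textbook proof, and every step checks out: the one-step extension correctly isolates the admissible interval $\sup_{y_1}\bigl[\varphi(y_1)-p(y_1-x_0)\bigr]\le c\le\inf_{y_2}\bigl[p(y_2+x_0)-\varphi(y_2)\bigr]$ and justifies its nonemptiness via $\varphi(y_1+y_2)\le p(y_1-x_0)+p(y_2+x_0)$; the Zorn's lemma exhaustion and the upgrade from the one-sided bound to $|\psi|\le p$ via $p(-x)=p(x)$ are both sound; and the Bohnenblust--Sobczyk reduction $\psi(x)=U(x)-iU(ix)$ with the rotation trick $|\psi(x)|=\psi(e^{-i\theta}x)=U(e^{-i\theta}x)\le p(x)$ is exactly the right way to handle $\mathbb{K}=\mathbb{C}$, which is the case the paper actually needs since it applies the theorem to the complex-valued function space $C^{\mathbb{C}}(I_m)$. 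The only blemish is notational: you reuse the symbol $U$ both for the subspace in the theorem statement and for the real-linear extension in the complex step; renaming the latter (e.g.\ to $u_{\mathbb{R}}$ or $\Psi_0$) would avoid confusion.
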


\begin{theorem}\label{th:riesz}
{\bf: Riesz Representation} \cite{analysis-riesz1914}\\

Let $X$ be a locally compact Hausdorff space. For any positive linear functional $\psi$ on $C(X)$, there exists a uniruq regular Borel measure $\mu$ such that
\begin{equation}
\forall f \in C_c(X): \qquad \psi(f) = \int_X f(x) d\mu(x)
\end{equation}
\end{theorem}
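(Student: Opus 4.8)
The plan is to construct the measure $\mu$ directly from the positive linear functional $\psi$, following the classical functional-analytic approach. First I would define $\mu$ on open sets by
\begin{equation}
\mu(V) = \sup\{\psi(f) : f \in C_c(X),\ 0 \le f \le 1,\ \mathrm{supp}(f) \subset V\},
\end{equation}
and then extend it to an arbitrary subset $E \subset X$ by outer regularity, $\mu(E) = \inf\{\mu(V) : E \subset V,\ V \text{ open}\}$. Positivity of $\psi$ immediately gives monotonicity of this set function and consistency with the open-set definition. The indispensable tool throughout is Urysohn's lemma for locally compact Hausdorff spaces: for any compact $K$ inside an open $V$ it furnishes an $f \in C_c(X)$ with $0 \le f \le 1$, $f \equiv 1$ on $K$, and $\mathrm{supp}(f) \subset V$. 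This is precisely what lets me manufacture the test functions needed to compare values of $\mu$ with values of $\psi$.

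Next I would isolate the collection $\mathcal{M}_F$ of sets of finite measure that are inner regular, \ie approximable from within by compact sets, and let $\mathcal{M}$ consist of those $E$ with $E \cap K \in \mathcal{M}_F$ for every compact $K$. The technical heart of the argument is to verify that $\mu$ is countably additive on $\mathcal{M}_F$ and that $\mathcal{M}$ is a $\sigma$-algebra containing every Borel set. Finite additivity follows by separating compact subsets of disjoint sets with Urysohn functions; countable additivity is then obtained by combining this with the countable subadditivity inherited from the outer-measure structure and passing to the limit. Along the way one must check that every compact set has finite measure and belongs to $\mathcal{M}_F$, and that every open set of finite measure is inner regular.

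With $\mu$ established as a regular Borel measure, the remaining step is the representation formula $\psi(f) = \int_X f\, d\mu$ for real-valued $f \in C_c(X)$. By linearity it suffices to prove the single inequality $\psi(f) \le \int_X f\, d\mu$, since applying it to $-f$ recovers equality. I would partition the range $[\min f, \max f]$ into small intervals, build a partition of unity subordinate to the preimages of these intervals via Urysohn's lemma, and thereby bound $\psi(f)$ above by a Riemann-type sum that converges to $\int_X f\, d\mu$ as the partition is refined. The general complex-valued statement then follows by decomposing into real and imaginary parts and each of those into positive and negative parts.

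Uniqueness is a consequence of regularity: two regular Borel measures representing the same functional must agree on every compact $K$ (each $\mu(K)$ being pinned down by infima of $\psi(f)$ over functions dominating $\mathbf{1}_K$), hence on all open sets by inner regularity, and therefore everywhere. The step I expect to be the genuine obstacle is the measure-theoretic bookkeeping that shows $\mathcal{M}$ is a $\sigma$-algebra and that $\mu$ is countably additive on it; orchestrating the interplay between outer regularity, inner regularity on compacts, and Urysohn separation is considerably more delicate than the representation formula itself, which becomes comparatively routine once the measure is in hand.
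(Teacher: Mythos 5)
The paper does not prove this statement: Theorem~\ref{th:riesz} is quoted verbatim as a classical result ``for the sake of completeness'' and attributed to the literature \cite{analysis-riesz1914}, so there is no in-paper argument to compare against. Judged on its own terms, your proposal is a faithful outline of the standard construction (the one in Rudin's \emph{Real and Complex Analysis}): defining $\mu$ on open sets as a supremum of $\psi$ over subordinate Urysohn functions, extending by outer regularity, isolating the class $\mathcal{M}_F$ of inner-regular sets of finite measure, proving the representation formula via a range partition and a partition of unity, and deducing the one-sided inequality $\psi(f)\le\int f\,d\mu$ before applying it to $-f$. The uniqueness argument via regularity is also the standard one. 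You correctly identify that the real labor lives in showing $\mathcal{M}$ is a $\sigma$-algebra containing the Borel sets and that $\mu$ is countably additive there; your proposal names these steps without executing them, so it remains a sketch rather than a complete proof, but the skeleton is sound and every step you describe is the right one. One small caution: the theorem as stated in the paper speaks of a functional on $C(X)$ but integrates only $f\in C_c(X)$; your construction, which works throughout with compactly supported functions, is the correct reading, and the resulting $\mu$ need not be finite when $X$ is not compact, which is why uniqueness must be asserted within the class of regular Borel measures exactly as you do.
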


\begin{theorem}\label{th:lebesgue}
{\bf: Lebesgue Bounded Convergence} \cite{analysis-weir1974}\\
Let $\lbrace f_n\rbrace$ be a sequence of complex-valued measurable functions on a measure space $(S, \Sigma, \mu)$. Suppose that $\lbrace f_n \rbrace$ converges pointwise to a function $f$ and is dominated by some integrable function $g(x)$ in the sense
\begin{equation}
|f_n(x)| \leq g(x), \qquad \int_S |g|d\mu < \infty
\end{equation}
then
\begin{equation}
\lim_{n\rightarrow \infty} \int_S f_n d\mu = \int_S f d\mu
\end{equation}
\end{theorem}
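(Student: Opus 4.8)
The plan is to reduce to the real case and then invoke Fatou's lemma. First I would record two preliminary facts: $f$ is measurable, being a pointwise limit of measurable functions, and $|f| \le g$ pointwise, since each $f_n$ satisfies $|f_n| \le g$ and inequalities pass to the limit. Because $g$ is integrable, this makes $f$ integrable, so every integral appearing below is finite. Writing $f_n = u_n + i v_n$ with $u_n, v_n$ real and noting $|u_n|, |v_n| \le |f_n| \le g$, it suffices to treat real-valued dominated sequences: the complex statement then follows by applying the real result separately to $u_n \to u$ and $v_n \to v$ and combining through linearity of the integral.

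For the real case the key device is to apply Fatou's lemma, which I take as the underlying tool following \cite{analysis-weir1974}, to the two \emph{nonnegative} sequences $g + f_n \ge 0$ and $g - f_n \ge 0$. Nonnegativity holds precisely because the domination hypothesis gives $-g \le f_n \le g$. Applying Fatou to $g + f_n$, and using $\liminf (g + f_n) = g + f$, yields $\int_S (g + f)\, d\mu \le \liminf \int_S (g + f_n)\, d\mu$; subtracting the finite quantity $\int_S g\, d\mu$ gives $\int_S f\, d\mu \le \liminf \int_S f_n\, d\mu$. Applying Fatou to $g - f_n$ in the same way gives $\int_S f\, d\mu \ge \limsup \int_S f_n\, d\mu$.

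Combining the two inequalities produces the chain $\limsup \int_S f_n\, d\mu \le \int_S f\, d\mu \le \liminf \int_S f_n\, d\mu$. Since $\liminf \le \limsup$ always holds, all three quantities coincide, so the numerical sequence $\int_S f_n\, d\mu$ converges and its limit equals $\int_S f\, d\mu$, which is exactly the assertion.

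I expect the main obstacle to be foundational rather than computational: the whole argument rests on Fatou's lemma, itself a consequence of the Monotone Convergence Theorem, so the only genuine content beyond standard measure theory is the symmetric trick of feeding the nonnegative shifts $g \pm f_n$ into Fatou to control $\liminf$ and $\limsup$ of the integrals at once. One must also check that subtracting $\int_S g\, d\mu$ is legitimate, which is guaranteed by the integrability of $g$; were one to avoid citing Fatou altogether, the harder step would be to reprove it from the Monotone Convergence Theorem.
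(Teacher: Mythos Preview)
Your argument is the standard and correct textbook proof of the Dominated Convergence Theorem via Fatou's lemma applied to the nonnegative shifts $g \pm f_n$; nothing is missing or wrong.

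However, there is nothing to compare against: the paper does not prove this theorem at all. It is listed, together with the Hahn--Banach and Riesz Representation theorems, under the phrase ``For the sake of completeness, we cover now the three theorems required for the proof,'' and is simply \emph{stated} with a citation to \cite{analysis-weir1974}. Its role in the paper is purely as an ingredient invoked inside the proof of Theorem~\ref{th:uat_2} (the quantum UAT), not as a result the authors establish. So your proposal supplies a proof where the paper deliberately offers none; if anything, you have gone beyond what the paper does rather than diverged from it.
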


\subsection{Link to output of quantum circuits}
Last sections were devoted to prove that specific series of circuits return functionalities able to represent a wide range of functions. In this last step we relate previous results to the output of quantum circuits.

\begin{theorem}
The computational basis output of a single-qubit quantum circuit can provide a convergent approximattion to any desired function.
\end{theorem}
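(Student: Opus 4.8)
The plan is to treat this closing statement as the bridge that converts the structural results of Theorems~\ref{th:fourier_2} and~\ref{th:uat_2}---which characterize the \emph{entries} of the composite unitary $\mathcal{U}^{(k),s}_{f,\phi}$---into a statement about the measurable \emph{output} of the circuit. The key observation is that preparing the state $\ket{0}$ and reading out in the computational basis exposes the first column of $\mathcal{U}^{(k),s}_{f,\phi}$: the amplitude $\bra{0}\mathcal{U}^{(k),s}_{f,\phi}\ket{0}$ is the $A_N$ entry and $\bra{1}\mathcal{U}^{(k),s}_{f,\phi}\ket{0}$ is the $B_N$ entry analyzed above. Hence the very quantity appearing in Eq.~\eqref{eq:aprox} is literally one of the series coefficients already computed, and nothing beyond a reinterpretation is formally required once convergence of that coefficient is established.

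First I would fix the gate type $s\in\{\mathcal{F},\mathrm{UAT}\}$ and recall from the corresponding structural theorem the explicit form of $B_N=\bra{1}\mathcal{U}^{(k),s}_{f,\phi}\ket{0}$: a truncated Fourier series with the frequencies fixed in Eq.~\eqref{eq:induction} for the $\mathcal{F}$ case, or the finite sum of exponentials of Eq.~\eqref{eq:B_N} for the UAT case. Next I would invoke the matching classical convergence result---Theorem~\ref{th:fourier} for the Fourier construction and Theorem~\ref{th:UAT}, in its complex $\sigma(\cdot)\to e^{i(\cdot)}$ form, for the UAT construction---to assert that, for a suitable choice of the parameters $\{\vec\theta_i\}$, this entry approaches the prescribed target $z(x)=f(x)e^{i\phi(x)}$: pointwise away from the finitely many discontinuities as $N\to\infty$ in the Fourier case, and uniformly to within any $\varepsilon>0$ in the UAT case. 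Since the computational-basis output amplitude equals $B_N$, this is precisely the claimed convergent approximation.

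The step demanding care, and the genuine obstacle, is reconciling the freedom required by the classical theorems with the rigidity imposed by unitarity, since $A_N$ and $B_N$ are not independent but locked together by $\mathcal{U}^{(k),s}_{f,\phi}$ being unitary, so one may not assign arbitrary coefficients to $B_N$ by hand. Two points close this gap. Admissibility of the target is automatic: unitarity forces $|B_N|\le 1$, which coincides with the hypothesis $f\in[0,1]$ already built into Eq.~\eqref{eq:aprox}, so no unreachable target is ever demanded. Reachability of the coefficients is where the real work lies; here I would argue that the per-layer recursion---Eq.~\eqref{eq:induction} for $U^{\mathcal F}$, driven by its five parameters $\omega,\alpha,\beta,\varphi,\lambda$, and the analogous update behind Eq.~\eqref{eq:B_N} for $U^{\mathrm{UAT}}$, driven by $\vec\omega,\alpha,\varphi$---carries enough degrees of freedom at each step to absorb the next coefficient dictated by the classical expansion, so the inductions underlying Theorems~\ref{th:fourier_2} and~\ref{th:uat_2} can be run with exactly the target coefficients rather than generic ones. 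With admissibility and reachability secured, identifying the output amplitude with $B_N$ completes the argument.
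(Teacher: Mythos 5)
Your proposal follows essentially the same route as the paper: prepare $\ket{0}$, read out the amplitude $\bra{1}\mathcal{U}^{(k),s}_{f,\phi}\ket{0}=B_N$, identify it with the truncated Fourier series from Theorem~\ref{th:fourier_2} or the exponential sum of Eq.~\eqref{eq:B_N} from Theorem~\ref{th:uat_2}, and invoke the corresponding classical convergence theorem. Your explicit discussion of reachability of the target coefficients under the unitarity constraint is in fact more careful than the paper's own proof, which simply asserts that the $B_n$ are ``free complex coefficients'' and defers entirely to the two structural theorems.
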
  

\begin{proof}
The output of a $k$-th approximant circuit can be cast a an approximation expansion of an arbitrary function.
It is sufficient to initalize a register in the $\vert 0\rangle$ state and measure the output in the computational basis. It follows 
\begin{equation}
    \bra{1} \prod_{i=0}^{N} U_i^s \ket{0} =  z_N(x)
\end{equation}
where $z_N(x)$ can take different forms. 

If the fundamental gate is $U^\mathcal{F}$, then the output is the truncated Fourier series
\begin{equation}
    z_N(x) = \sum_{n=-N}^N B_n e^{i 2\pi n x},
\end{equation}
where $B_n$ are free complex coefficients. This result holds for single-variable functions.

If the fundamental gate is $U^{\rm UAT}$, then the output is a function
\begin{equation}
\small
    z_N(\vec x) = \sum_{m = 0}^{2^{N - 1}} c_m(\varphi_1, \ldots, \varphi_N) e^{i \delta_m(\alpha_1, \ldots, \alpha_N)} e^{i \vec w_m(\vec\omega_1, \ldots, \vec\omega_N) \cdot \vec x}, 
\end{equation}
according to Eq. \eqref{eq:B_N}. This result holds for single- and multi-variable functions.

According to theorems \ref{th:fourier_2} and \ref{th:uat_2}, both expressions can approximate any desired function.

\end{proof}

\bigskip

\section{UAT for complex functions}\label{app:real_to_complex_uat}
In this Appendix we show that the standard formulation of the UAT supports the approximation of complex function using $e^{i (\cdot)}$ as the activation function.

Let us 
follow the approximations according to the UAT of the function
\begin{equation}
    z(\vec x) = a(\vec x) + i b(\vec x), 
\end{equation}
using trigonometric functions as $\sigma(\cdot)$, 
\begin{eqnarray}
a(x) = \sum_{j=1}^N \alpha_i \cos(\vec w_j \cdot \vec x + a_j) \\
b(x) = \sum_{j=1}^N \beta_i \sin(\vec v_j \cdot \vec x + b_j).
\end{eqnarray}
Then
\begin{multline}
z(x) = \sum_{j=1}^N \alpha_i \cos(\vec w_j \cdot \vec x + a_j) + \\ i \sum_{j=1}^N \beta_i \sin(\vec v_j \cdot \vec x + b_j),
\end{multline}

and this equation is can be rearranged as 
\begin{multline}
    z(x) = \sum_{j=1}^N \frac{\alpha_j}{2}\left( e^{i (\vec w_j \cdot \vec x + a_j)} + e^{-i ( \vec w_j \cdot \vec x + a_j)}\right) + \\ \frac{\beta_j}{2} \left(e^{i (\vec v_j \cdot \vec x + b_j)} - e^{-i (\vec v_j \cdot \vec x + b_j)} \right),
\end{multline}

what encourages the UAT formulation for complex functions as an analogous to Eq. \eqref{eq:UAT}
\begin{equation}\label{eq:complex_UAT}
    G(\vec x) = \sum_{n=1}^N \gamma_n e^{i\delta_n} e^{i \vec u_n \cdot \vec x}.
\end{equation}

\section{2D functions for benchmark}\label{app:2D_benchmark}

\begin{figure}
    \flushleft
    \includegraphics[width=.5\textwidth]{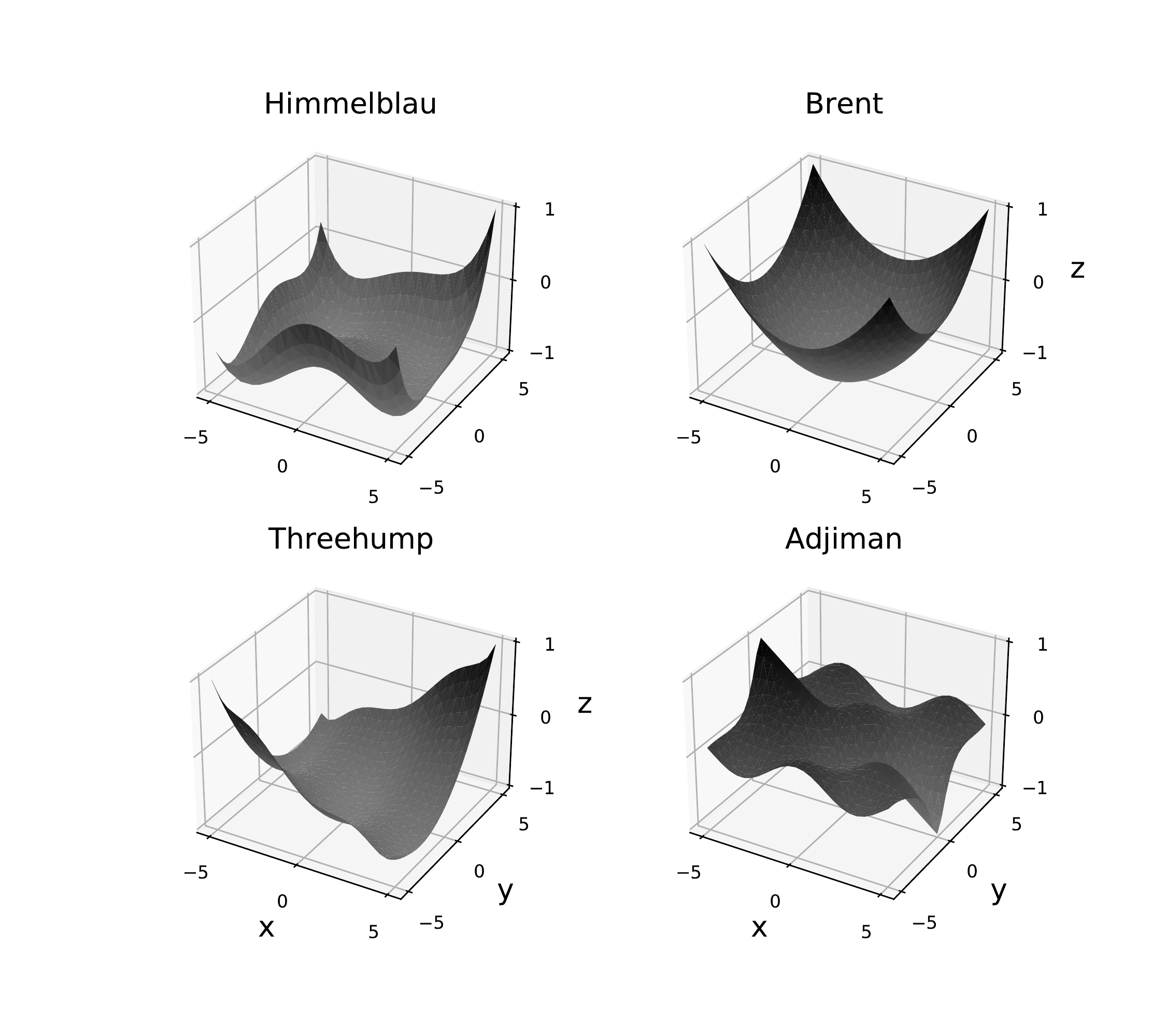}
    \caption{Graphical representation of 2-dimensional functions utilized for benchmarking. A regularization is applied to obtain $Z$ values between $-1$ and $1$ for the given domain.}
    \label{fig:2d_functions}
\end{figure}

The definitions used for the 2-dimensional functions \cite{2d_functions} that serve for benchmarking our proposed algorithms are define as

\begin{multline}
    {\rm Himmelblau}(x, y)  = \\ = (x^2 + y - 11)^2 + (x + y^2 - 7)^2,
\end{multline}

\begin{multline}
{\rm Brent}(x, y) = \\ =    \left(\frac{x}{2}\right)^2 + \left(\frac{y}{2}\right)^2 + e^{-\left(\left(\frac{x}{2} - 5\right)^2 + \left(\frac{y}{2} - 5\right)^2)\right)},
\end{multline}

\begin{multline}
{\rm Threehump}(x, y) = \\ = 2 \left(\frac{2 x}{5}\right)^2 - 1.05 \left(\frac{2 x}{5}\right)^4 + \frac{1}{6}\left(\frac{2 x}{5}\right)^6 + \\ + \left(\frac{2 x}{5}\right) \left(\frac{2 y}{5}\right) + \left(\frac{2 y}{5}\right)^2,
\end{multline}

\begin{multline}
{\rm Adjiman}(x, y) = \cos(x) \sin(y) - \frac{x}{y^2 + 1},
\end{multline}

where a normalization to $-1 \leq f(x, y) \leq 1$ is applied after this definition. A graphical representation of these functions is depicted in Fig. \ref{fig:2d_functions}.

\section{Experimental methods}\label{app:experiment}
\begin{table*}
    \centering
    \resizebox{.8\linewidth}{!}{\begin{tabular}{ *{13}{l} }
      \toprule
      \multicolumn{1}{c}{\textbf{Optimal}} 
      & \multicolumn{1}{c}{$p_1$}  & \multicolumn{1}{c}{$p_2$}  & \multicolumn{1}{c}{$p_3$}
      & \multicolumn{1}{c}{$p_4$}  & \multicolumn{1}{c}{$p_5$}  & \multicolumn{1}{c}{$p_6$}
      & \multicolumn{1}{c}{$p_7$}  & \multicolumn{1}{c}{$p_8$}  & \multicolumn{1}{c}{$p_9$}
      & \multicolumn{1}{c}{$p_{10}$}  & \multicolumn{1}{c}{$p_{11}$}  & \multicolumn{1}{c}{$p_{12}$}
      \\
      \textbf{parameters}
      & -2.501 & 1.685 &   1.757 & 2.105 & 3.822 & -1.788 &
 -1.507 & -4.640 & 0.430 & 1.875 & 5.038 & -1.906
      \\[0.2cm]
      \hline
      \multicolumn{1}{c}{\textbf{Rotational}} 
      & \multicolumn{2}{c}{$Z_1$}   & \multicolumn{1}{c}{$Y_1$}
      & \multicolumn{2}{c}{$Z_2$}   & \multicolumn{1}{c}{$Y_2$}
      & \multicolumn{2}{c}{$Z_3$}   & \multicolumn{1}{c}{$Y_3$}
      & \multicolumn{2}{c}{$Z_4$}   & \multicolumn{1}{c}{$Y_4$}
      \\
      \multicolumn{1}{c}{\textbf{angles$^*$}} 
      & \multicolumn{2}{c}{$p_1  + p_2 x$}   & \multicolumn{1}{c}{$p_3$}
      & \multicolumn{2}{c}{$p_4 + p_5 x$}   & \multicolumn{1}{c}{$p_6$}
      & \multicolumn{2}{c}{$p_7 + p_8 x$}   & \multicolumn{1}{c}{$p_3$}
      & \multicolumn{2}{c}{$p_{10} + p_{11} x$}   & \multicolumn{1}{c}{$p_{12}$}
      \\[0.2cm]
      \multicolumn{1}{c}{\textbf{$x=-0.5$}} 
      & \multicolumn{2}{c}{2.939}
& \multicolumn{1}{c}{1.757}
& \multicolumn{2}{c}{0.194}
& \multicolumn{1}{c}{4.495}
& \multicolumn{2}{c}{0.813}
& \multicolumn{1}{c}{0.430}
& \multicolumn{2}{c}{5.639}
& \multicolumn{1}{c}{4.377}
      \\
      \multicolumn{1}{c}{\textbf{$x=0$}} 
      & \multicolumn{2}{c}{3.782}
& \multicolumn{1}{c}{1.757}
& \multicolumn{2}{c}{2.105}
& \multicolumn{1}{c}{4.495}
& \multicolumn{2}{c}{4.776}
& \multicolumn{1}{c}{0.430}
& \multicolumn{2}{c}{1.875}
& \multicolumn{1}{c}{4.377}
      \\
      \multicolumn{1}{c}{\textbf{$x=1$}} 
      & \multicolumn{2}{c}{5.467}
& \multicolumn{1}{c}{1.757}
& \multicolumn{2}{c}{5.927}
& \multicolumn{1}{c}{4.495}
& \multicolumn{2}{c}{0.136}
& \multicolumn{1}{c}{0.430}
& \multicolumn{2}{c}{0.630}
& \multicolumn{1}{c}{4.377}
      \\[0.2cm]
      \hline
      \multicolumn{9}{l}{\small  $^*$ Angles between $0$ and $2\pi$}        
    \end{tabular}}
    \caption{Optimal parameters and angles obtained for $\relu(x)$ and 4 layers. Above the 12 parameters that define the rotational angles obtained through simulations. Below the corresponding angles of the 8 rotations for three different values of $x$. Note that Y rotations are not $x$-dependent, hence they are equal for all three $x$ values.}
    \label{tab:rotation angles}
    
\end{table*}
\begin{figure}[t!]
\centering
\includegraphics[width=\linewidth]{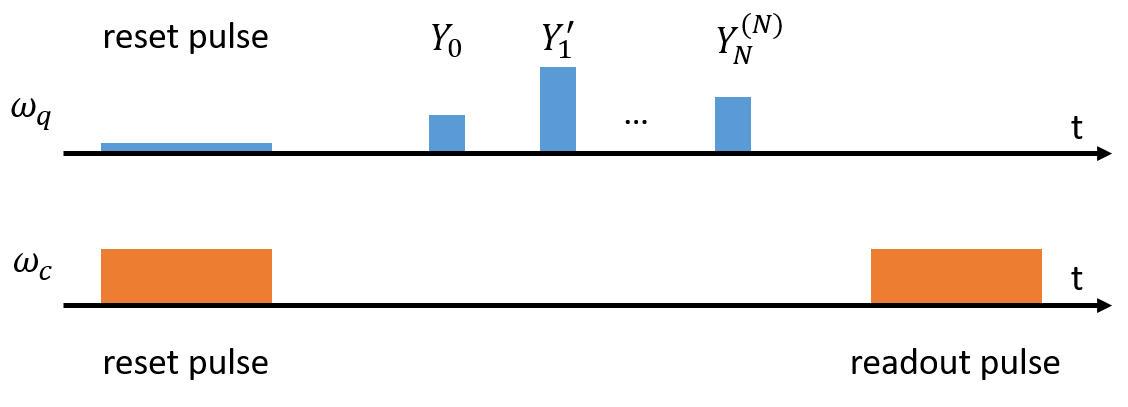}
\caption[Full Sequence]{Complete pulse sequence. First, the reset protocol is performed which corresponds to two pulses at the cavity and the qubit frequencies, respectively. Note that the qubit pulse is of considerably lower amplitude than the cavity pulse. Also, both pulses have a longer duration than the qubit rotation sequence (timings not to scale). The ``Y" pulses are shown to have different amplitudes to determine each rotation angle. Finally, the readout corresponds to a pulse at the cavity frequency which is later read out by a digitizing card.}
\label{fig:full-seq} 
\end{figure}

\begin{figure}
\centering
\includegraphics[width=\linewidth]{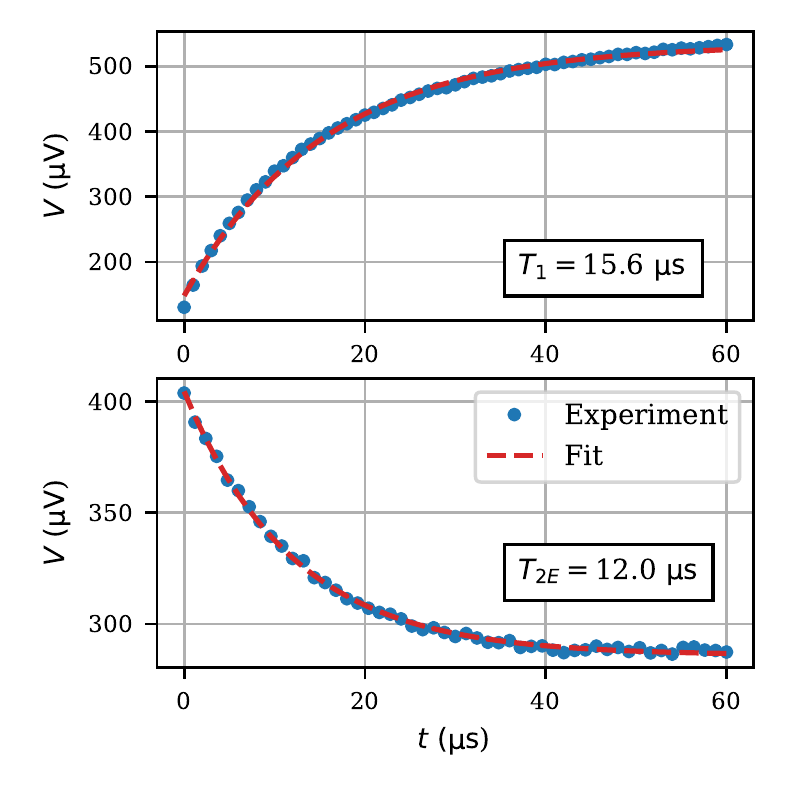}
\includegraphics[width=\linewidth]{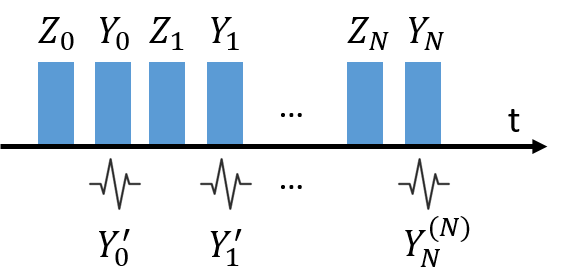}
\caption[Coherence times and sequence]{\textbf{a)} $T_1$ measurement with exponential fit. \textbf{b)} Spin-echo measurement, $T_{2E}$, with exponential fit. \textbf{c)} Sequence performed in the experiment. Blue boxes represent actual pulses. Logical Y and Z rotations are explicitly shown below the blue boxes~\cite{virtual-zgates}. Note that Z pulses do not correspond to any microwave pulse, instead subsequent pulses change rotation axis, indicated by a prime, $Y_N^{(N)}$.}
\label{fig:coh-times} 
\end{figure}

The experiment was realized in a dilution fridge with a base temperature of approximately $20$~mK. The qubit rotation pulses were defined by an arbitrary waveform generator (AWG) and then upconverted with a microwave signal generator to the gigahertz frequency range before being sent to the qubit/cavity system. The signal was low-pass filtered and attenuated by a total of 50dB before reaching the aluminum cavity. The input port of the cavity was undercoupled while the output port was overcoupled in order to maximize the readout signal amplitude. The outgoing signal was amplified by a cryogenic low noise amplifier and a second amplification stage at room temperature. The downconversion is performed with the same microwave generator as used in the upconversion of the measurement pulse, guaranteeing phase coherence in the downconversion process. The signal is read out in a digitizer, with a FPGA that demodulates and averages the results before sending the data to the main measurement computer.
\begin{figure}[t!]
\centering
\includegraphics[width=\linewidth]{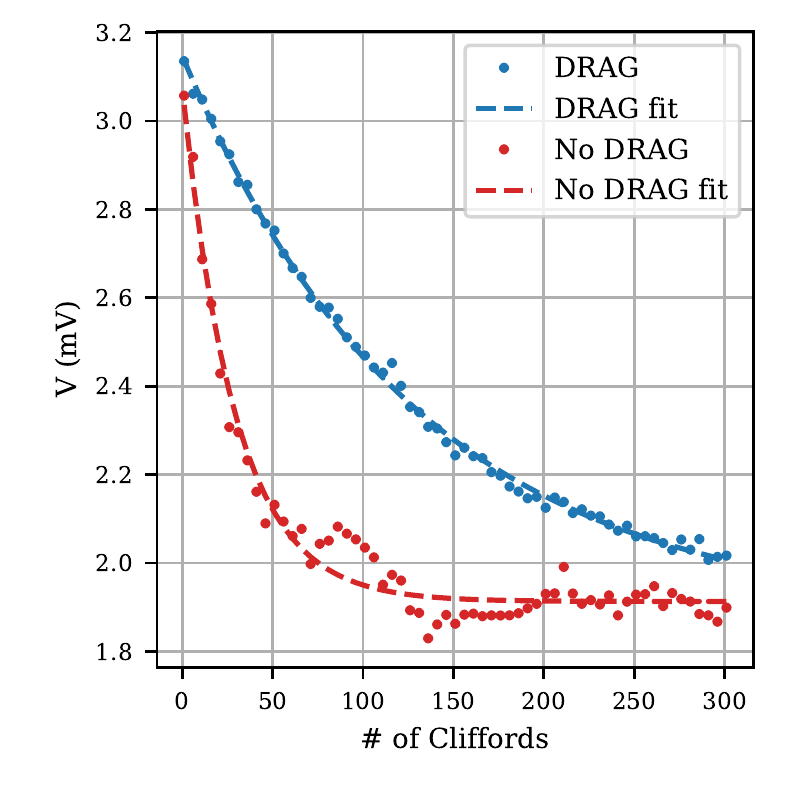}
\caption[Randomized Benchmarking]{Randomized benchmarking of the DRAG corrected pulses. The fit corresponds to the expression $Ap^n + B$, where $A$ and $B$ have dimensions of voltage, $n$ is the number of Clifford gates, and $p$ is the fidelity per gate. $\epsilon = 1-p$ is the error per gate.}
\label{fig:ran-ben} 
\end{figure}

Figure~\ref{fig:full-seq} shows the total pulse sequence, which includes preparation and measurement pulses in addition to the pulse sequence shown in the main text. The Y rotations are performed through microwave pulses at the qubit frequency while the Z rotations, as already stated, are phase changes in subsequent pulses. An example of the rotation angles for the $\relu(x)$ function in the 4-layer case is shown in Table~\ref{tab:rotation angles}. The readout consists of a cavity tone at the frequency of the cavity for the qubit in the $\ket{0}$ state. High/low transmission corresponds to the qubit being in the ground/excited state, assuming the system does not escape from the computational basis. Each data point requires around $50000$ measurements in order to average out the amplifier noise. A reset protocol that drives the qubit into the ground state is implemented prior to each individual sequence. This has two benefits. The first one allows us to start with a qubit state nearly polarized into the ground state. A second benefit is the reduction in the overall duration of the experiment, since the waiting time between individual measurements is not limited by the qubit relaxation time.

Both qubit and cavity pulses are generated at $70$~MHz and then upconverted to the gigahertz range. The qubit pulses are Gaussian pulses with a total duration of $21$~ns. A proper DRAG correction is performed with a resulting error per gate of $\epsilon = 0.01$ as shown in Fig.~\ref{fig:ran-ben}. The cavity pulse has a total length of around $2~\mu$s. The reset protocol consists of a pulse driving the qubit and a pulse driving the cavity mode, with a total duration of around $2~\mu$s.

\vfill

\end{document}